\newtheorem{theorem}{\textbf{Theorem}}
\theoremstyle{definition}
\newtheorem{definition}{Definition}
\begin{document}

\title{Efficient Blockchain-based Steganography via Backcalculating Generative Adversarial Network}

% \textsuperscript{*}
% ~\IEEEmembership{Senior Member,~IEEE,}
\author{Zhuo~Chen,
        Jialing~He,
        % Peng~Jiang,~\IEEEmembership{Member,~IEEE,}
        Jiacheng~Wang,
        % Yiwei~Liu,
        Zehui Xiong,~\IEEEmembership{Senior~Member,~IEEE,}
        Tao~Xiang,~\IEEEmembership{Senior~Member,~IEEE,}
        Liehuang~Zhu,~\IEEEmembership{Senior~Member,~IEEE,}
        and~Dusit~Niyato,~\IEEEmembership{Fellow,~IEEE}% <-this % stops a space
\thanks{Zhuo Chen and Liehuang Zhu are with the School of Cyberspace Science and Technology, Beijing Institute of Technology, Beijing 100081, China.}% <-this % stops a space
\thanks{Jialing He and Tao Xiang are with the College of Computer Science, Chongqing University, Chongqing 400044, China.}% <-this % stops a space
\thanks{Jiacheng Wang and Dusit Niyato are with Nanyang Technological University, Singapore.}% <-this % stops a space
\thanks{Zehui Xiong is with the School of Electronics, Electrical Engineering and Computer Science (EEECS), Queen's University Belfast, Belfast, BT7 1NN, U.K..}% <-this % stops a space
\thanks{Corresponding author: Jiacheng Wang.}% <-this % stops a space
}

% \textsuperscript{*}
% ~\IEEEmembership{Senior~Member,~IEEE,}

% The paper headers
% \markboth{Journal of \LaTeX\ Class Files,~Vol.~14, No.~8, August~2021}%
% {Shell \MakeLowercase{\textit{et al.}}: A Sample Article Using IEEEtran.cls for IEEE Journals}

% \IEEEpubid{0000--0000/00\$00.00~\copyright~2021 IEEE}
% Remember, if you use this you must call \IEEEpubidadjcol in the second
% column for its text to clear the IEEEpubid mark.

\maketitle

\begin{abstract}
  Blockchain-based steganography enables data hiding via encoding the covert data into a specific blockchain transaction field. However, previous works focus on the specific field-embedding methods while lacking a consideration on required field-generation embedding. In this paper, we propose a generic blockchain-based steganography framework (GBSF). The sender generates the required fields such as amount and fees, where the additional covert data is embedded to enhance the channel capacity. Based on GBSF, we design a reversible generative adversarial network (R-GAN) that utilizes the generative adversarial network with a reversible generator to generate the required fields and encode additional covert data into the input noise of the reversible generator. We then explore the performance flaw of R-GAN. To further improve the performance, we propose R-GAN with \underline{C}ounter-intuitive data preprocessing and \underline{C}ustom activation functions, namely \underline{CC}R-GAN. The counter-intuitive data preprocessing (CIDP) mechanism is used to reduce decoding errors in covert data, while it incurs gradient explosion for model convergence. The custom activation function named ClipSigmoid is devised to overcome the problem. 
  Theoretical justification for CIDP and ClipSigmoid is also provided. 
  We also develop a mechanism named T2C, which balances capacity and concealment.
  We conduct experiments using the transaction amount of the Bitcoin mainnet as the required field to verify the feasibility. 
  We then apply the proposed schemes to other transaction fields and blockchains to demonstrate the scalability. 
  Finally, we evaluate capacity and concealment for various blockchains and transaction fields and explore the trade-off between capacity and concealment. The results demonstrate that R-GAN and CCR-GAN are able to enhance the channel capacity effectively and outperform state-of-the-art works.
\end{abstract}

\begin{IEEEkeywords}
  Blockchain, steganography, covert transmission, capacity enhancement, GAN
\end{IEEEkeywords}

\section{Introduction}
% \IEEEPARstart{T}{his}

Steganography enables both senders and receivers to transmit data secretly over a public network channel~\cite{mandal2022digital, liu2022deep, zheng2022wmdefense}. It is widely used in digital watermarking~\cite{qiang2023natural}, censorship-resistant systems~\cite{rosen2021balboa} and digital forensics~\cite{zhou2023generative}. Due to concealing the communication behavior between the sender and the receiver, steganography ensures a secure transmission of confidential military and commercial information~\cite{xu2022robust, wang2024generative2}. In the blockchain-based steganography~\cite{zhang2023covert, miao2022privacy}, the sender and the receiver establish a covert channel through the blockchain network~\cite{chen2022blockchain, du2022applications}. The sender hides the covert data into a specific transaction field and broadcasts the covert transaction to the blockchain. The receiver identifies the covert transaction from the blockchain and decodes it to access the covert data.

Previous studies primarily focus on encoding the covert data into specific transaction fields (i.e., embedding fields), while rarely considering the proper generation of required fields that used to complete a transaction~\cite{wang2023covert, zhang2023ebdl}. Wang~\textit{et al.}~\cite{wang2022practical} demonstrated that the improper generation of the required fields can easily expose covert transactions, and proposed a required-field-generation method by applying generative adversarial networks (GANs)~\cite{creswell2018generative, wang2024generative}. This approach is able to generate required fields that are indistinguishable from normal transaction fields. However, it lacks a consideration of embedding the covert data into the required fields' generation process, facing the following main challenges. 
\begin{itemize}
  \item \emph{Less redundancy.} Blockchain fields contain less redundant information than those of audio and image data~\cite{koptyra2022extension, zhang2020magview}, making it more challenging to conceal data.
  \item \emph{No semantics.} Unlike text, a blockchain field is simply a number without semantic information. As a result, typical text steganography methods~\cite{yang2023semantic, yang2022linguistic, li2022gasla, cui2019unseencode}, which rely on semantic and state transfer probabilities, are unsuitable for blockchain field steganography.
  \item \emph{Difficult to encode.} Existing studies often employ deep generative models to generate indistinguishable required fields~\cite{wang2022practical}. However, these models are often uninterpretable, making it difficult to encode data into the generated fields.
\end{itemize}

In this paper, we propose a generic blockchain-based steganography framework (GBSF) that improves the capacity of blockchain-based covert channels. In GBSF, the sender employs a GAN to generate required fields and encodes the covert data as input to the GAN's generator. This input consists of random noise, which is similar to the random string nature of covert data (often ciphertext). However, as deep learning models are typically irreversible, it is difficult for the receiver to restore the covert data. To address this challenge, we therefore introduce the concept of a reversible GAN (R-GAN) whose generator is reversible. With the reversible GAN, we then introduce the R-GAN scheme. Our key insight is to model the generator of GAN as an invertible multivariate function by carefully configuring the network structure. We accomplish this by constructing the generator using linear neural networks (e.g., fully connected layers and convolutional neural networks~\cite{he2024preventing}) and reversible activation functions (e.g., Sigmoid and LeakyReLU~\cite{he2025advancing}). The receiver backcalculates the generator to retrieve the input noise and the covert data.

R-GAN is only capable of embedding and recovering a small amount of covert data. This is because the generator of R-GAN tends to produce fractional numbers, whereas the transaction field requires integer values. Consequently, it incurs inaccuracy and rounding errors. To enhance performance, we introduce R-GAN with a \underline{C}ounter-intuitive data preprocessing (CIDP) method and a \underline{C}ustom activation function, namely \underline{CC}R-GAN. We propose CIDP to mitigate the rounding error. We design a custom activation function called ClipSigmoid as the output of the model to eliminate excessive gradient and to improve model convergence. Additionally, we provide theoretical justification for both CIDP and ClipSigmoid.
Inspired by CIDP, we also propose T2C, a mechanism for balancing data embedding capacity and concealment. The core idea of T2C is to improve the quality of the training dataset at the expense of increasing the rounding error, thus trading data embedding capacity for concealment.
\begin{figure*}[t]
\centerline{\includegraphics[width=0.9\textwidth]{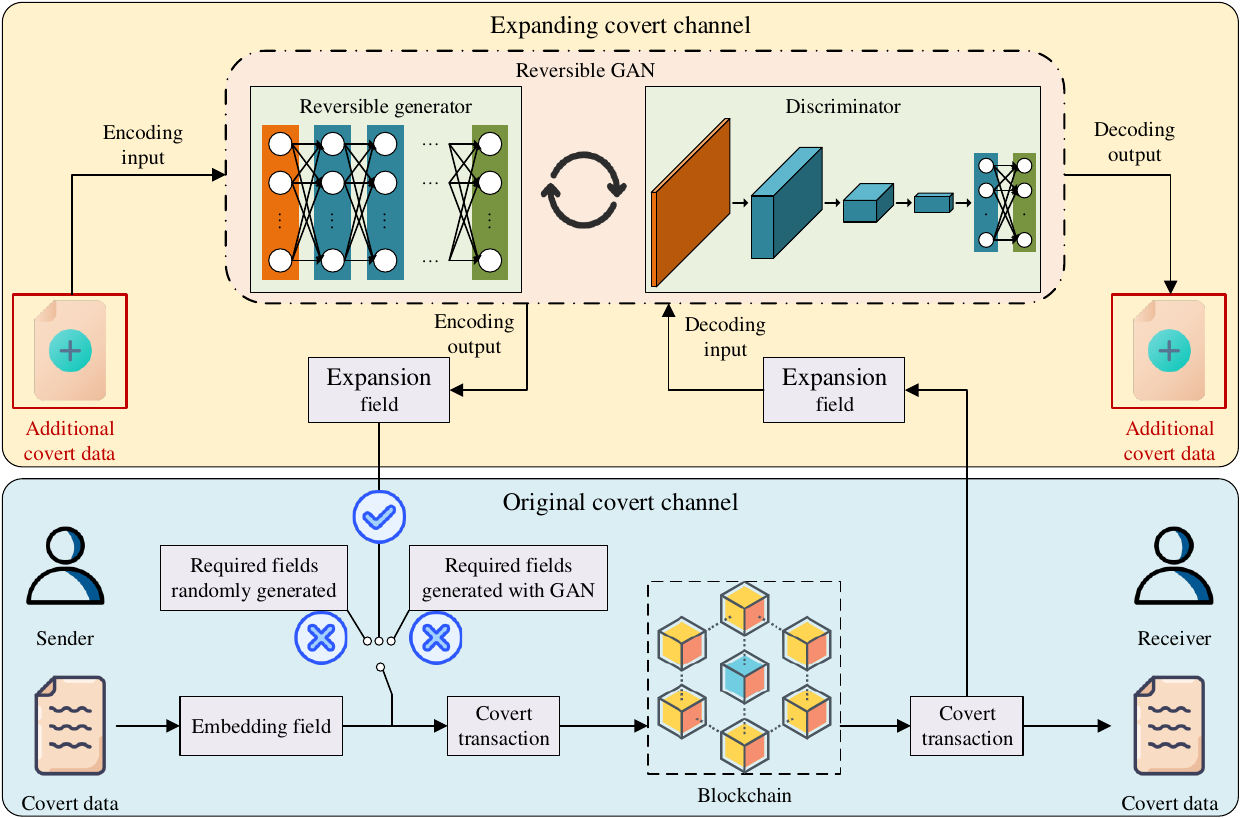}}
\captionsetup{justification=raggedright}
\caption{Overview of GBSF framework. In GBSF, the sender uses the R-GAN to generate the required fields and complete a transaction. The generator of R-GAN is reversible. It inputs additional covert data and outputs transaction fields which we call expansion fields. Given the expansion field, the receiver can calculate the generator in reverse to extract covert data.}
\label{fig: system_model}
\end{figure*}

Finally, we present an implementation of both R-GAN and CCR-GAN and verify their scalability. We first use the Bitcoin transaction amount as an example to evaluate the feasibility of proposed schemes. We then employ Bitcoin fee, Ethereum amount, and Ethereum fee as the dataset, and test the data embedding ability and concealment of R-GAN/CCR-GAN when generating these fields to evaluate the scalability. Experimental results reveal that both R-GAN and CCR-GAN are able to be applied to the above blockchains and transaction fields. Furthermore, we explore the capacity and concealment aspects of R-GAN and CCR-GAN. Compared to R-GAN, CCR-GAN is able to embed a larger amount of covert data at the expense of partial concealment since it has a lower rounding error. When the rounding error of CCR-GAN is small enough, the computational precision error becomes a decisive factor in limiting the amount of embedded data. We further explore the edges of the rounding error and the computational precision error. The experimental results show that under the IEEE 754 standard~\cite{guo2024single}, a dual-layer generator network supports embedding up to approximately 40 bits of data in a transaction field due to the computational precision error. The actual embedded data amount depends on the rounding error, which is determined by the magnitude of the difference between the maximum and minimum values in the training dataset. A larger magnitude results in smaller rounding errors and allows for more data to be embedded. When the magnitude reaches $10^{17}$, the rounding error can be considered sufficiently small, and the amount of embedded data reaches the upper limit imposed by the computational precision error. Overall, both the data embedding capacity and concealment of the proposed schemes surpass those of baselines.

In summary, main contributions of this paper include:
\begin{itemize}
  \item We propose a GBSF framework that enhances the capacity of blockchain-based covert channels. In GBSF, the sender leverages the reversible GAN to generate the required fields for creating transactions, while encoding covert data as the input to the generator. In this way, GBSF can effectively improve channel capacity.
  \item We propose two schemes, namely R-GAN and CCR-GAN. The main concept behind R-GAN is to model its generator as a reversible function that allows the receiver to decode covert data by reversing the generator's computation. However, R-GAN has a capacity limitation due to rounding errors. We further propose CCR-GAN to improve capacity. In comparison to R-GAN, CCR-GAN incorporates a counter-intuitive data preprocessing method named CIDP and a custom activation function called ClipSigmoid. These additions reduce rounding errors and facilitate model convergence. We also present analytical justifications for CIDP and ClipSigmoid.
  \item We utilize the Bitcoin transaction amount as the required field for implementing and evaluating GBSF on the Bitcoin mainnet. We conduct tests to measure the amount of covert data that R-GAN and CCR-GAN can embed in each transaction amount, as well as their capacity to enhance existing blockchain-based steganography schemes. Experimental results demonstrate that both schemes exhibit a high level of capacity and concealment enhancement capabilities.
  \item We evaluate the scalability of R-GAN and CCR-GAN. We apply R-GAN and CCR-GAN to the transaction fee field of Bitcoin. Experimental results show that the Bitcoin fee is also applicable to the proposed schemes. We also apply R-GAN and CCR-GAN to Ethereum. The experimental results show that the proposed schemes can also be extended to Ethereum amount and Ethereum fee.
  \item We devise T2C, a mechanism to support fine-grained trading capacity for concealment. We also design experiments to verify the effectiveness of T2C. The results show that for every reduction/increase in capacity by 2-3 bits, concealment is simultaneously increased/decreased by 3\%-4\%. We also find that there is a boundary in the trade off between capacity and concealment. When capacity reaches the upper limit, it is no longer possible to increase capacity by sacrificing concealment. Under the IEEE 754 standard, the capacity limit is about 40 bits per transaction field.
\end{itemize}

\section{Covert Channel: R-GAN}

\subsection{GBSF Framework}

We introduce a generic framework for enhancing the capacity of blockchain-based steganography, as illustrated in Fig.~\ref{fig: system_model}. The framework consists of an original covert channel and an expanding covert channel. 

The blue box illustrates the original covert channel, consisting of a sender, a receiver, and a blockchain. The sender encodes the covert data into a transaction field (i.e., the embedding field) such as the address and the signature~\cite{yuan2025blockchain,chen2025tackling}. Afterwards, the sender generates the remaining transaction fields required for transaction creation (i.e., the required field), either randomly or using deep generative models. These required fields are utilized to create the covert transaction, which is then broadcasted to the blockchain network. The receiver retrieves the covert transaction from the blockchain and decodes the covert data based on the embedding field.

The expanding covert channel (depicted in the yellow box in Fig. \ref{fig: system_model}) enhances the channel capacity, which is achieved by embedding additional covert data in required fields rather than generating required fields randomly or using deep generative models. The key concept behind the expanding covert channel is the utilization of a specialized GAN, i.e., R-GAN. Similar to a typical GAN, the R-GAN consists of a generator and a discriminator. The difference is that its generator is reversible, allowing the function expression of the generator to be backcalculated. This property enables the sender to encode additional covert data using the generator in the forward direction, while the receiver utilizes the generator in the back direction to decode the additional covert data. With R-GAN, the sender can encode additional covert data into transaction fields that are indistinguishable from normal transaction fields. These fields carrying the additional covert data are known as expansion fields. By incorporating the expansion fields alongside the original embedding fields, the sender constructs the complete covert transactions, effectively increasing the capacity of the blockchain-based covert channel. The receiver retrieves the expansion field from the covert transaction and inputs it into the R-GAN. The R-GAN then performs the reverse process of encoding and outputs the additional covert data. In this way, we establish an expanding covert channel to transmit the additional covert data and boost the channel capacity.
% \section{Expanding Covert Channel with R-GAN}
In the following, we present the technical details of the expanding covert channel with R-GAN. 
% We begin with an overview, followed by an elaboration of the mechanism.

\subsection{Technical Overview}

As depicted in Fig. \ref{fig: scheme_process}, the expanding covert channel consists of three main steps. Firstly, the sender and the receiver train a model to generate expansion fields. Secondly, the sender encodes covert data into the generated expansion fields. Finally, the receiver decodes the expansion fields to recover the covert data. Note that this paper focuses on details of the encoding and decoding principle and does not consider the model synchronization between the sender and the receiver. Model synchronization method used in~\cite{zhou2022secret,qi2024provably} can also be adopted in our schemes. More precisely, the sender and the receiver can simply obtain an identical model by sharing the rules for training the model. For example, every 3 days, they select the expansion field from the last 100 blocks as a dataset to train the model using a series of identical seeds. The training process does not stop until the model loss falls below a certain threshold.
\begin{figure}[t]
\centerline{\includegraphics[width=0.4\textwidth]{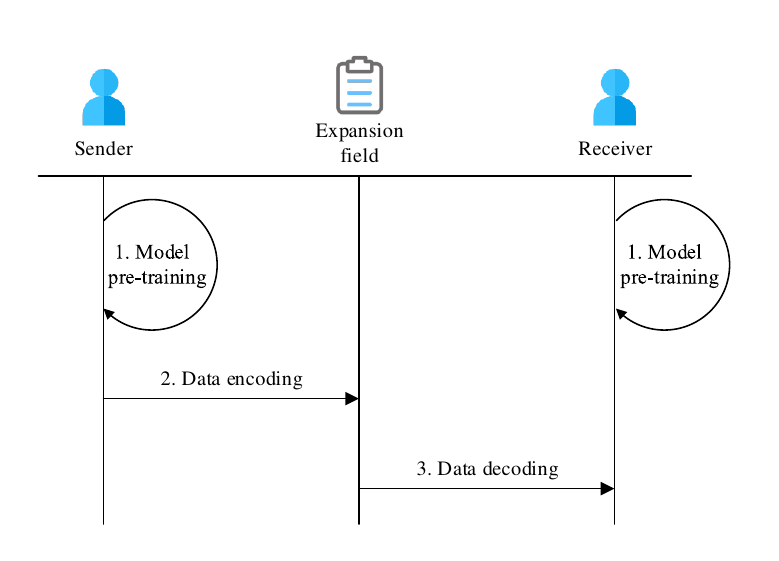}}
\captionsetup{justification=raggedright}
\caption{General workflow of the expanding covert channel. The sender and the receiver first train a R-GAN model, respectively. The sender uses the trained model to encode covert into the expansion field, and the receiver uses the trained model to decode covert data from the expansion field.}
\label{fig: scheme_process}
\end{figure}

\begin{figure}[t]
  \centering
  \subfigure[Transaction amount.]{
      \begin{minipage}[t]{0.45\linewidth}
          \centering
      \includegraphics[width=1.7in]{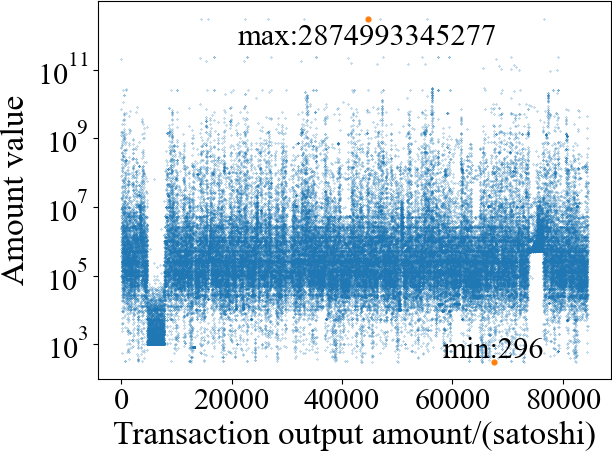}
      \label{fig: tx_amount}
  \end{minipage}
  }
  \subfigure[Length of transaction amount.]{
      \begin{minipage}[t]{0.45\linewidth}
          \centering
      \includegraphics[width=1.7in]{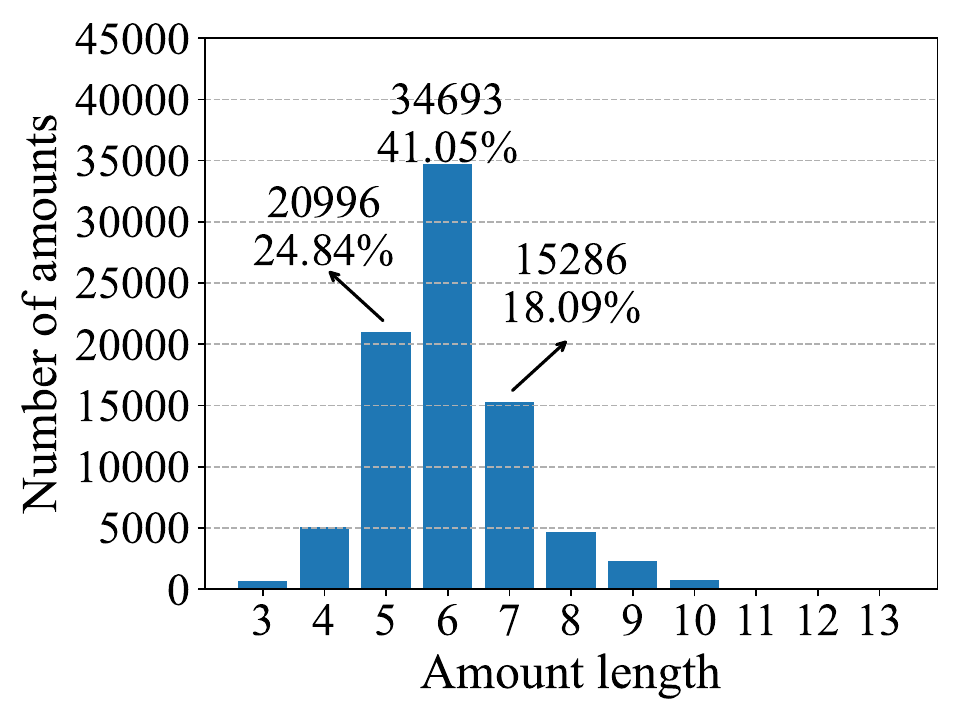}
      \label{fig: len_tx_amount}
      \end{minipage}
  }
  \caption{Example of Bitcoin transaction amount. The transaction amount is a one-dimensional numerical value, ranging in length from 3-13 and concentrated between 5-7.}
  \label{fig: example_tx_amount}
\end{figure}

\subsection{Model Pre-training} \label{sec: basic_dp}

This section outlines the process of model pre-training, which encompasses data preprocessing, model structure, and the loss function. 
We utilize the transaction output amount of Bitcoin as the expansion field to illustrate the pre-training process. We specifically choose this amount field since each transaction must include a specified output amount, and the transaction creator has full control over the output amount.

\subsubsection{Data preprocessing}

\begin{figure*}[t]
\centerline{\includegraphics[width=0.9\textwidth]{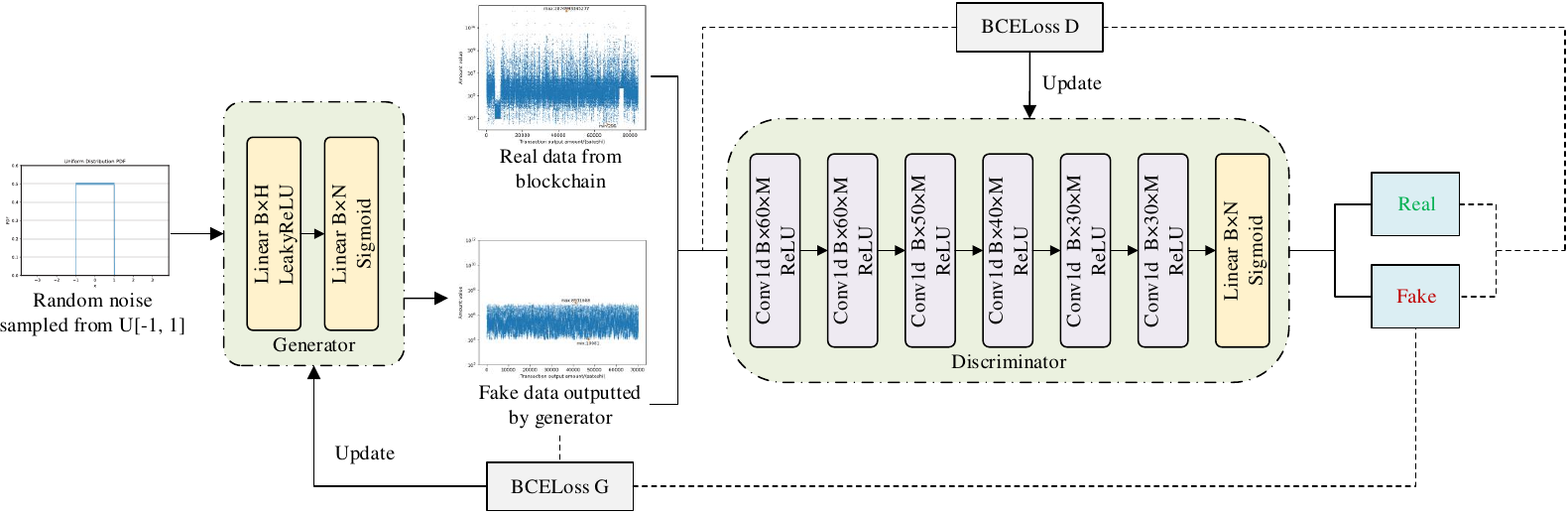}}
\captionsetup{justification=raggedright}
\caption{Overview of R-GAN. $U$ refers to a uniform distribution; $B$ is the batchsize; $H$ represents the hidden dimension; $M$ denotes the input dimension; $N$ is the output dimension. BCELoss refers to the binary cross entropy loss.}
\label{fig: GAN_model}
\end{figure*}
The model takes the transaction output amount as the input, which is represented as an integer like ``18105990". Fig. \ref{fig: tx_amount} illustrates the distribution of Bitcoin transaction output amounts for 25 blocks, from block 727215 to block 727239. These amounts range from 296 to 2,874,993,345,277, with a dense distribution between $10^4$ and $10^7$. Fig. \ref{fig: len_tx_amount} displays the number and proportion of amounts grouped by length. Amounts with lengths ranging from 5 to 7 account for a significant portion, totaling 83.98\%. We select the most common data, specifically the data with lengths ranging from 5 to 7, as the training dataset $X$, and apply min-max normalization to normalize the training dataset to the range from 0 to 1. Let us denote $X = \{x_1, x_2, \cdots, x_n\}$, then for $i \in \{1, 2, \cdots, n\}$, we normalize
\begin{equation}
  x_i' = \frac{x_i - min(X)}{max(X) - min(X)},
\end{equation}
where $min(X)$ and $max(X)$ represent the minimum and maximum elements of $X$, respectively. We also reshape the dataset and group adjacent 64 data points to form a new training dataset. The formation process is represented as:
\begin{equation}
  new\_x_i = \{x_{(i-1)*64 + 1}', x_{(i-1)*64 + 2}', \cdots, x_{i*64}'\},
\end{equation}
where $i \in \{1, 2, \cdots, \lceil n/64 \rceil \}$. 
This formation process helps the model capture the relationship between transaction amounts and reduces the likelihood of encoding the same covert data as the same transaction amount.

\subsubsection{Model structure}

Fig. \ref{fig: GAN_model} presents the overview of R-GAN, comprising a generator and a discriminator. The generator takes a random noise sampled from $U[-1, 1]$ (uniform distribution) as the input and fake amounts as the output. We use a uniform distribution to sample the input noise to accommodate covert data within the noise. The reason we choose the uniform distribution as the input is that it aligns with the randomness nature of encrypted data. The covert data (typically encrypted) encoded into the generator's input can be seen as a random string, which exhibits randomness as well as uniform distribution~\cite{katz2007introduction}. The generator consists of two fully connected layers to ensure reversibility. While the fully connected layer is a simple neural network, it is sufficient to handle one-dimensional numerical data like the transaction amount. The first layer utilizes LeakyReLU as the activation function due to its piecewise linearity property, which can reduces decoding errors during covert data recovery. The last layer employs Sigmoid as the activation function, as it is a monotonic function that outputs values between 0 and 1. The monotonicity of Sigmoid enables reversibility, and its output range aligns with normalized training data. 

The discriminator focuses on evaluating the concealment of generated fake data and is unrelated to the recovery of the covert data. Complex irreversible networks can be employed to construct the discriminator. Due to the excellent feature capture capabilities of convolutional neural networks (CNNs), we use 6 CNNs with ReLU to form the discriminator. The last layer of the discriminator is a fully connected layer with Sigmoid, commonly utilized for binary classification tasks.

\subsubsection{Loss function}

We use the binary cross-entropy loss (BCELoss) to calculate the loss function since BCELoss is more suitable for handling values between 0 and 1 and processing binary classification tasks. The loss function of the model comprises two parts, including the discriminator's loss and the generator's loss. The discriminator's loss is defined as the prediction error of the discriminator on real and fake samples. Assume that the label for real samples is 1 and the label for fake samples is 0, then the loss function of the discriminator can be expressed as follows:
\begin{equation}
  J_{discriminator} = -\frac{1}{2}\sum (\log (1 - D(G(noise))) + \log D(x')),
\end{equation}
where $D$ represents the discriminator, $G$ is the generator, $noise$ denotes the input noise of $G$, and $x'$ represents the normalized element of the training dataset $X$. 

The second part penalizes the generator for generating fake samples that are recognized by the discriminator, which we measure by:
\begin{equation}
  J_{generator} = -\sum \log D(G(noise)).
\end{equation}

\subsection{Data Encoding}

In our schemes, the data encoding process consists of two phases: embedding and verification. In the embedding phase, covert data is embedded into the noise, while the verification phase ensures that the receiver can correctly recover the covert data in the presence of the rounding and computational errors.

\noindent \textbf{Embedding phase.} The sender incorporates the covert data into the input of the generator, which is a noise vector. The sender replaces certain bits of each noise element with covert data. Fig. \ref{fig: encode} illustrates the structure of the noise, which consists of $1$-bit most significant bit (MSB), $m$-bit covert data, and $(n-m-1)$-bit Padding. Both MSB and Padding are uniformly sampled, ensuring the noise spans a wide range of values.
\begin{figure}[t]
\centerline{\includegraphics[width=0.45\textwidth]{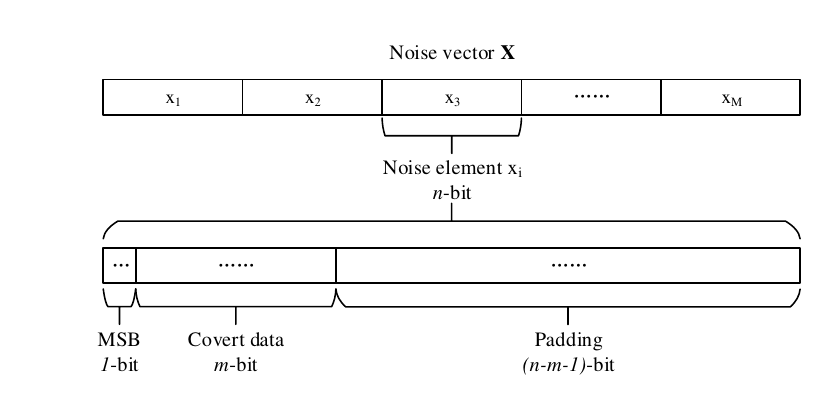}}
\captionsetup{justification=raggedright}
\caption{Noise structure. MSB is the most significant bit.}
\label{fig: encode}
\end{figure}

\begin{algorithm}[htb]
% \scriptsize
  \caption{Data encoding.}
  \label{alg: encoding}
  \KwIn{Covert data $\mathbf{CD} = (cd_1, cd_2, \cdots, cd_M) \in \{0, 1\}^{m \times M}$.}
  \KwOut{Noise $\mathbf{X} = (x_1, x_2, \cdots, x_M) \in \mathbb{R}^M$.}

  Initialize $\mathbf{X} = (x_1, x_2, \cdots, x_M)$\;
  \While{True}{
    // \textbf{Begin embedding phase}\;
    \For(Randomize $MSB$ and $Padding$){$i$ in $range(\mathsf{M})$}{
      Uniformly sample an $MSB_i \in \{0, 1\}$\;
      Uniformly sample a $Padding_i \in \{0, 1\}^{n-m-1}$\;
      Set $x_i = MSB_i || cd_i || Padding_i$\;
    }
    Compute $\mathbf{Y} = G(\mathbf{X})$\;
    Denormalize $\mathbf{A} = \mathbf{Y} \times (max(X) - min(X)) + min(X) $\;
    // \textbf{Begin verification phase}\;
    Round $\hat{\mathbf{A}} = [\mathbf{A}]$\;
    Normalize $\hat{\mathbf{Y}} = \frac{\hat{\mathbf{A}} - min(X)}{max(X) - min(X)}$\;
    Compute $\hat{\mathbf{X}} = G^{-1}(\hat{\mathbf{Y}})$\;
    Initialize $Result$ = []\;
    \For{$\hat{x}_i \in \hat{\mathbf{X}}$}{
      \If{bits $2$ to $(m+1)$ of $\hat{x}_i$ == bits $2$ to $(m+1)$ of $x_i$}{
        $Result.append(True)$\;
      }
      \Else{
        $Result.append(False)$\;
      }
    }
    \If{All elemets of $Result$ are True}{
      Break\;
    }
  }
  \Return{$\mathbf{X}$}
\end{algorithm}
\noindent \textbf{Verification phase.} The verification phase ensures that the bits representing the covert data in the recovered noise match those sent by the sender. This phase is necessary because the receiver may not obtain the exact same noise as the sender during the decoding process. The inconsistencies may arise due to computational errors and rounding errors. The computational errors refer to inaccuracies in decimal calculations on computers, while the rounding errors occur when the generator's decimal amounts are rounded to on-chain integer amounts.

For simplicity and better understanding, consider that the generator takes an $M$-dimensional noise vector $\mathbf{X}$ as input, where each element consists of $n$ bits. The covert data is represented by an $M$-dimensional vector $\mathbf{CD}$, with each element consisting of $m$ bits. We denote the generator as $G(\cdot)$ and its inverse as $G^{-1}(\cdot)$. The data encoding process is illustrated in Algorithm \ref{alg: encoding}. The sender starts by sampling MSB and Padding for each covert data element $cd_i$, and concatenates them to form an $n$-bit noise element (lines 4-8). Using the noise vector composed of these noise elements, the sender obtains a decimal transaction amount vector $\mathbf{A}$ based on $G(\cdot)$ (lines 9-10). Since on-chain transaction amounts must be integers, decimal values are rounded to integer values $\hat{\mathbf{A}}$ (line 12). The sender then computes the recovered noise based on the rounded integer values (lines 13-14) and verifies whether the covert data bits in the recovered noise match those of the original noise (lines 15-26). The sender iteratively performs the embedding phase and the verification phase until a satisfying noise vector is obtained. Note that the infinite loop (line 2) can be prevented by randomizing MSB and decreasing the value of $m$. Directly setting the first $m$ bits of the noise as covert data without a randomizing MSB may incurs an infinite loop since only modifying the lower bits has less influence on the overall noise value. Increasing the value of $m$ allows for more covert data to be embedded in each transaction amount. However, this also leads to a longer time overhead in finding a satisfactory noise vector. The trade-off between $m$-value and the time overhead is explored in the experimental section.

\subsection{Data Decoding}

The main concept behind data decoding is to reverse the computation of the generator. Let the first layer has an $M$-dimensional input and an $H$-dimensional output, and the second layer has an $H$-dimensional input and an $N$-dimensional output. LeakyReLU has a slope of $\alpha$. Suppose the receiver obtains a transaction amount vector $\hat{\mathbf{A}} = (\hat{a}_1, \hat{a}_2, \cdots, \hat{a}_N) \in \mathbb{Z}^N_+$. The receiver can recover the covert data $\widehat{\mathbf{CD}} = (\widehat{cd}_1, \widehat{cd}_2, \cdots, \widehat{cd}_M) \in \{0, 1\}^{m \times M}$ in the following steps outlined in Fig. \ref{fig: decode}.
\begin{figure}[t]
\centerline{\includegraphics[width=0.45\textwidth]{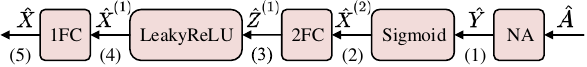}}
\captionsetup{justification=raggedright}
\caption{Data decoding. 1FC and 2FC refers to the first fully connected layer and the second fully connected layer. NA denotes normalization.}
\label{fig: decode}
\end{figure}

\begin{enumerate}[(1)]
  \item Normalize the transaction amount to obtain the output of Sigmoid. Given $\hat{\mathbf{A}}$, compute
  \begin{equation}
    \hat{\mathbf{Y}} = \frac{\hat{\mathbf{A}} - min(X)}{max(X) - min(X)},
  \end{equation}
  where $\hat{\mathbf{Y}} = (\hat{y}_1, \hat{y}_2, \cdots, \hat{y}_N) \in \mathbb{R}^N_+$.

  \item Calculate the output of the second fully connected layer. Given the output of the generator $\hat{\mathbf{Y}}$, compute
  \begin{equation}
    \hat{\mathbf{X}}^{(2)} = Logistic(\hat{\mathbf{Y}}),
  \end{equation}
  where $Logistic(\cdot)$ is the logistic function and $\hat{\mathbf{X}}^{(2)} = (\hat{x}^{(2)}_1, \hat{x}^{(2)}_2, \cdots, \hat{x}^{(2)}_N) \in \mathbb{R}^N$.

  \item Calculate the output of LeakyReLU. Given the output of the second fully connected layer $\hat{\mathbf{X}}^{(2)}$, compute
  \begin{equation}
    \hat{\mathbf{Z}}^{(1)} = \mathbf{W}_2^{-1}\hat{\mathbf{X}}^{(2)},
  \end{equation}
  where $\mathbf{W}_2^{-1}$ is the inverse matrix of the weight matrix of the second fully connected layer and $\hat{\mathbf{Z}}^{(1)} = (\hat{z}^{(1)}_1, \hat{z}^{(1)}_2, \cdots, \hat{z}^{(1)}_H) \in \mathbb{R}^H$. Note that $\mathbf{W}_2$ is a matrix with $N$ rows and $H$ columns , and $\mathbf{W}_2^{-1}$ exists only when $N = H$ and $\mathbf{W}_2$ is a full-rank matrix. We thus set $N = H$ in the model training process.

  \item Calculate the output of the first fully connected layer. Given the output of LeakyReLU $\hat{\mathbf{Z}}^{(1)} = (\hat{z}^{(1)}_1, \hat{z}^{(1)}_2, \cdots, \hat{z}^{(1)}_H) \in \mathbb{R}^H$, for each $\hat{z}^{(1)}_i$ where $i \in [1, 2, \cdots, H]$, compute
  \begin{equation}
    \hat{x}^{(1)}_i = \left\{
    \begin{aligned}
    & \hat{z}^{(1)}_i, \hat{z}^{(1)}_i \geq 0 \\
    & \frac{1}{\alpha}\hat{z}^{(1)}_i, \hat{z}^{(1)}_i < 0,
    \end{aligned}
    \right.
  \end{equation}
  and set $\hat{\mathbf{X}}^{(1)} = (\hat{x}^{(1)}_1, \hat{x}^{(1)}_2, \cdots, \hat{x}^{(1)}_H) \in \mathbb{R}^H$.

  \item Calculate the recovered noise $\hat{\mathbf{X}}$. Given $\hat{\mathbf{X}}^{(1)}$, compute
  \begin{equation}
    \hat{\mathbf{X}} = \mathbf{W}_1^{-1}\hat{\mathbf{X}}^{(1)},
  \end{equation}
  where $\mathbf{W}_1^{-1}$ is the inverse matrix of the weight matrix of the first fully connected layer and $\hat{\mathbf{X}} = (\hat{x}_1, \hat{x}_2, \cdots, \hat{x}_M) \in \mathbb{R}^M$. $\mathbf{W}_1$ is a matrix with $H$ rows and $M$ columns. We also set $H = M$ such that $\mathbf{W}_1^{-1}$ exists.

  \item Intercept and concatenate the recovered noise's covert data bits. Given $\hat{\mathbf{X}} = (\hat{x}_1, \hat{x}_2, \cdots, \hat{x}_M) \in \mathbb{R}^M$, compute
  \begin{equation}
    \widehat{\mathbf{CD}} = (\hat{x}_1[2:m+1], \cdots, \hat{x}_M[2:m+1]) \in \{0, 1\}^{m \times M},
  \end{equation}
  where $\hat{x}_i[2:m+1]$ refers to the bit string formed by concatenating the 2nd bit to the $(m+1)$th bit of $\hat{x}_i$.
\end{enumerate}

In summary, data decoding leverages the reversible nature of the generator to extract covert data from the transaction amount. To ensure the feasibility of the inverse calculation, both activation functions must be monotonic and continuous, and the generator's parameters $M$, $H$, and $N$ are set to be the same. The generator takes an M-dimensional noise vector as input and produces an M-dimensional transaction amount vector. Each element of the noise vector carries m-bit covert data. It is worth noting that there exists a computation error between the actual noise $\mathbf{X}$ and the recovered noise $\hat{\mathbf{X}}$. The value of $m$ is determined by the number of consecutive identical bits at the beginning of $\mathbf{X}$ and $\hat{\mathbf{X}}$. Therefore, a smaller computation error between $\mathbf{X}$ and $\hat{\mathbf{X}}$ allows for more bits of covert data to be encoded per noise/transaction amount. 
% Experiments in Section \ref{sec: self_exp} demonstrate that R-GAN only permits the embedding of $11$-bit covert data per transaction amount. 
To enhance performance, we propose a counter-intuitive data preprocessing method and a custom activation function to increase the amount of covert data that can be embedded.

\section{Improved Covert Channel: CCR-GAN}

In this section, we propose CCR-GAN to improve R-GAN. The limited performance of R-GAN stems from the discrepancy between the noise recovered by the receiver and that sent by the sender, which arises from two errors~\cite{butora2023errorless}.
\begin{itemize}
  \item The first error is the precision error inherent to computers when performing decimal calculations, as there exists a built-in precision limit.
  \item The second error occurs when rounding transaction amounts. The generator of R-GAN generates a decimal value, while on-chain transaction amounts must be integers. The sender rounds the decimal number to an integer, which introduces errors.
\end{itemize}

The first error can be reduced by utilizing data types with higher precision. To reduce the second type of error, we initially introduce a counter-intuitive data preprocessing mechanism referred to as CIDP.

\subsection{Counter-intuitive Data Preprocessing}

% \noindent \textbf{CIDP.} 
Recall that the sender selects data with a higher occurrence probability as the training dataset. This choice allows the model to disregard extreme data and their influence on feature extraction, promoting generating data closely resembling normal data. 
% However, we observe that the selection also leads to an increase in rounding errors. This is because the smaller the multiplier (i.e., $max(X) - min(X)$) in denormalization, the relatively larger the value being rounded, and the larger the rounding error. We propose CIDP, in which the sender includes all data as part of the training dataset, to decrease the error. By doing so, extremely large and small data points are not excluded, resulting in an increased $max(X) - min(X)$ and, consequently, an decreased rounding error. However, not selecting data incurs an extremely uneven distribution of normalized training data. About $83.98\%$ of the data falls between $10^{-8}$ and $10^{-5}$, while the entire range is from $0$ to $1$. This unevenness makes the model difficult to converge.
Additionally, it encourages a more balanced and symmetrical distribution of the training data, which mitigates model overfitting and facilitates faster model convergence. However, we observe that this selection also leads to an increase in rounding errors. The rationale behind this observation is explained in the following.

For better clarity, we discuss the rounding error on a single transaction amount element instead of a transaction amount vector. To facilitate this discussion, we first propose the concept of the number of perfectly identical digits ($NPID$) between two real numbers ranging from $0$ to $1$.
\begin{definition}
  \textbf{(The number of perfectly identical digits, NPID).} Consider two decimal numbers, $0<a = a_0 \times 10^{0} + a_1 \times 10^{-1} + \cdots + a_m \times 10^{-m}<1$ and $0<b = b_0 \times 10^{0} + b_1 \times 10^{-1} + \cdots + b_n \times 10^{-n}<1$. We define $NPID(a, b)$ as the count of consecutive identical digits in $a$ and $b$ when counted from the integer digit backwards. Specifically, $NPID(a, b) = k+1$ if $a_k = b_k$ holds for all $i \in [0, 1, \cdots, k]$.
\end{definition}

Note that this definition represents $NPID$ as a decimal form, denoted by $NPID_{10}$. Alternatively, $NPID$ can also be expressed in binary form as $NPID_2$, indicating the number of consecutive unbroken identical bits. For clarity, we default $NPID$ to operate in decimal. In this context, $NPID$ represents the count of matching digits at each place value, starting from the integer digit, until a mismatched digit is encountered. For example, $NPID(1.81, 1.85) = 2$ and $NPID(0.1235, 0.1245) = 3$. We utilize $NPID$ as a measure of the rounding error, as it directly reflects the maximum amount of covert data recoverable by the receiver. A larger $NPID$ indicates a smaller rounding error. Next, we provide a justification for our observation that not dropping maximum and minimum extreme points of the training dataset, i.e., increasing the difference between the maximum and minimum values of the training dataset, results in an increased $NPID$.

\begin{theorem} \label{the: dp}
  Suppose the generator $G$ outputs $y$ ($0<y<1$), and the receiver inputs $\hat{y}$ to $G^{-1}$ during the data decoding process. When the minimum value of the training set is much smaller than the maximum value, for every ten-fold increase in the maximum value of the training dataset, then $NPID(y, \hat{y})$ is increased by $1$.
\end{theorem}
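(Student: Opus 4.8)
The plan is to trace how the single unavoidable rounding error propagates through the (affine) normalization maps, and then to convert the resulting magnitude bound on $|y - \hat{y}|$ into a digit count, i.e.\ into $NPID$.

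First I would bound the error at its source. In Algorithm~\ref{alg: encoding} the decimal amount $a = y \cdot (max(X) - min(X)) + min(X)$ is rounded to the nearest integer $\hat{a} = [a]$, so $|a - \hat{a}| \le \tfrac{1}{2}$. Because both the denormalization and the subsequent renormalization $\hat{y} = (\hat{a} - min(X))/(max(X) - min(X))$ are affine with the \emph{same} scale factor $s = max(X) - min(X)$, the error transports exactly:
\[
|y - \hat{y}| = \frac{|a - \hat{a}|}{max(X) - min(X)} \le \frac{1}{2\,(max(X) - min(X))}.
\]
This is the clean, assumption-free part of the argument and should require no more than substituting the two normalization identities.

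Next I would invoke the hypothesis $min(X) \ll max(X)$ to replace $max(X) - min(X)$ by $max(X)$, giving $|y - \hat{y}| \lesssim \tfrac{1}{2}\,max(X)^{-1}$. Writing $max(X)$ by its order of magnitude, say $max(X) \asymp 10^{p}$ where $p$ is essentially the number of digits of $max(X)$, this reads $|y - \hat{y}| \lesssim \tfrac{1}{2}\cdot 10^{-p}$. The mechanism of the theorem is now visible: multiplying $max(X)$ by ten increments $p$ by one and hence divides the error bound by ten. I would then convert this magnitude bound into an $NPID$ count. Since $0 < y, \hat{y} < 1$, their integer digits both vanish and always agree, contributing the leading $+1$ to $NPID$; for the decimal digits, $y$ and $\hat{y}$ coincide in their first $k$ places when they fall in the same cell of the decimal grid of resolution $10^{-k}$, and an error of size $\approx 10^{-p}$ forces agreement through roughly the first $p$ decimal places, so $NPID(y,\hat{y}) \approx p + 1$. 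Combined with the previous step, this yields the claimed unit increase of $NPID$ per decade of $max(X)$.

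The main obstacle is exactly this last conversion, because the implication "$|y-\hat{y}|$ is small $\Rightarrow$ $y$ and $\hat{y}$ share many leading digits" is not an equivalence: when $y$ and $\hat{y}$ straddle a digit boundary (for instance $0.1999\ldots$ versus $0.2000\ldots$) an arbitrarily small difference can still break agreement at the very first decimal place and collapse $NPID$. I would address this by arguing that such carry/boundary configurations are non-generic, occupying a vanishing fraction of the output interval $(0,1)$ as the resolution $10^{-p}$ shrinks; hence, outside a negligible set of amounts — or in the typical sense that matches the empirical evaluation reported later — the estimate $NPID \approx p+1$ holds, and the decade-to-digit correspondence asserted in the theorem follows.
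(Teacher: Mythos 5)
Your argument is correct and rests on the same mechanism as the paper's proof: both reduce the composition of denormalization, rounding, and renormalization to $\hat{y} = [yQ]/Q$ with $Q = max(X)-min(X) \approx max(X) = 10^{p}$, and then translate the scale $10^{-p}$ of the rounding perturbation into a count of agreeing leading digits. The difference is one of formulation. The paper argues constructively: it writes $y = \overline{0.y_1y_2\cdots y_n}$ and observes that (in the round-down case) $\hat{y}$ is literally the truncation $\overline{0.y_1\cdots y_m 0\cdots 0}$, so the first $m$ decimal digits agree by inspection and $NPID = m+1$. You instead prove the metric bound $|y-\hat{y}| \le \tfrac{1}{2}(max(X)-min(X))^{-1}$ and then convert closeness into digit agreement. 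Your route requires the extra digits-from-distance step, but it buys something: you explicitly confront the carry/boundary failure mode ($0.1999\ldots$ vs.\ $0.2000\ldots$), which the paper silently absorbs into the phrase ``without loss of generality, we consider $y_{m+1}\le 4$'' --- precisely the case where rounding up can propagate a carry and collapse $NPID$. Your genericity argument (boundary configurations occupy a vanishing fraction of $(0,1)$ as $10^{-p}\to 0$) is the honest way to dispose of that case, and it also explains why the claim should be read as holding for typical outputs rather than for every $y$; the paper's additional assumption $y_{m+1}\neq 0$ plays the same role of pinning $NPID$ to exactly $m+1$ rather than something larger. In short: same theorem, same scale analysis, but your metric phrasing makes visible (and then repairs) a caveat the paper's digit-expansion phrasing hides.
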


\begin{proof}
  We begin by using $y$ to represent $\hat{y}$. Let $X$ be the training dataset, $max(X)$ (an integer) denote the maximum element of $X$, and $min(X)$ (also an integer) denote the minimum element of $X$. The output $y$ is a normalized value. The sender denormalizes $y$ to obtain a decimal transaction amount $a$:
  \begin{equation} \label{equ: 1}
    a = y \times (max(X) - min(X)) + min(X).
  \end{equation}
  The sender then rounds $a$ to an integer:
  \begin{equation} \label{equ: 2}
    \hat{a} = [a].
  \end{equation}
  The receiver can only access $\hat{a}$ from the blockchain, which is normalized to serve as the input to $G^{-1}$:
  \begin{equation} \label{equ: 3}
    \hat{y} = \frac{\hat{a} - min(X)}{max(X) - min(X)}.
  \end{equation}
  Combining equations (\ref{equ: 1}), (\ref{equ: 2}), and (\ref{equ: 3}) gives
  \begin{equation}
    \begin{aligned}
    \hat{y} & = \frac{[y \times (max(X) - min(X)) + min(X)] - min(X)}{max(X) - min(X)} \\
    & = \frac{[y \times (max(X) - min(X))]}{max(X) - min(X)},
    \end{aligned}
  \end{equation}
  where $max(X) - min(X)$ is fixed for a given $X$. Since $min(X)$ is much smaller than $max(X)$, we have $max(X) - min(X) \approx max(X)$. Let $Q = max(X)$ for simplicity, then
  \begin{equation}
    \hat{y} = \frac{[yQ]}{Q}.
  \end{equation}
  Note that $Q$ is an integer and $y$ is a decimal in the range $0$ to $1$. Selecting transaction amounts with higher occurrence likelihood leads to a larger $Q$. 
  
  Now we show that for every ten-fold increase in $Q$, $NPID(y, \hat{y})$ increases by $1$. Let $y = y_1 \times 10^{-1} + \cdots + y_n \times 10^{-n}$, denoted as $\overline{0.y_1y_2 \cdots y_n}$. Let $Q = 10^m$. We have
  \begin{equation}
      \hat{y} = \frac{[\overline{y_1y_2 \cdots y_m.y_{m+1} \cdots y_n}]}{10^m},
  \end{equation}
  where $m \leq n$. Without loss of generality, we consider $y_{m+1} \leq 4$ ($y_{m+1}$ is rounded down when rounding). Then we have
  \begin{equation}
      \hat{y} = \frac{\overline{y_1y_2 \cdots y_m}}{10^m} = \overline{0.y_1y_2 \cdots y_m00 \cdots 0}.
  \end{equation}

  Without loss of generality, assume $y_{m+1} \neq 0$. In this case, $NPID(y, \hat{y}) = m + 1$. We complete the proof.
\end{proof}

Motivated by the theoretical justification mentioned above, we propose CIDP, which allows the sender to assign a larger value to $max(X) - min(X)$. In this approach, the sender includes all transaction amounts as part of the training dataset, without filtering out those with a higher likelihood of occurrence. By doing so, extremely large and small data points are not excluded, resulting in an increased $max(X) - min(X)$ and, consequently, an increased number of bits each transaction amount carries according to Theorem. \ref{the: dp}. However, not selecting transaction amounts incurs an extremely uneven distribution of normalized training data. About $83.98\%$ ($24.84\% + 41.05\% + 18.09\%$, see Fig.~\ref{fig: len_tx_amount} for detail) of the data falls between $10^{-8}$ and $10^{-5}$, while the entire range is from $0$ to $1$. This unevenness slows down the model's convergence speed, increases the risk of overfitting, and reduces the
diversity of generated data.

\subsection{ClipSigmoid}

% \noindent \textbf{ClipSigmoid.} 
CIDP can reduce the rounding error, while presenting a significant challenge in model convergence. In this section, we present ClipSigmoid as a solution to overcome this challenge. To improve readability, we first discuss the reasons behind this challenge and then provide the solution.

The convergence of R-GAN with CIDP is extremely difficult since the weight update gradient during backpropagation is too large relative to the input data. This leads to difficulties in achieving fine-grained variations in the model's output. In other words, each step taken by the model in the search for an optimal solution is too large to reach a better solution. The following content explains the occurrence of the long step.

Sigmoid has influence on the convergence rate of the model. Fig. \ref{fig: clip_justify} illustrates the weight update process related to Sigmoid. 
\begin{figure}[t]
\centerline{\includegraphics[width=0.4\textwidth]{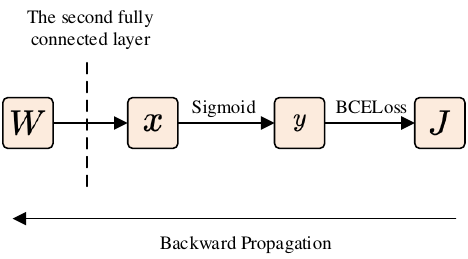}}
\captionsetup{justification=raggedright}
\caption{Sigmoid-related weight update process. $W$ represents any of the weight parameters in the model before the second fully connected layer, $x$ denotes the output of the second fully connected layer, $y$ is the output of Sigmoid, and $J$ refers to the loss function.}
\label{fig: clip_justify}
\end{figure}
Let $\sigma$ denote Sigmoid, which takes an input $x$ and produces an output $y$. The model updates its weights by:
\begin{equation}
  \begin{aligned}
  \Delta W & = \frac{\partial J}{\partial W} \\
  & = \frac{\partial J}{\partial y} \cdot \frac{\partial y}{\partial x} \cdot \frac{\partial x}{\partial W} \\
  & = \frac{\partial J}{\partial y} \cdot \sigma'(x) \cdot \frac{\partial x}{\partial W},
  \end{aligned}
\end{equation}
where $J$ represents the loss function, and $W$ is the model's weight. The derivative value of $\sigma$ is proportional to $\Delta W$. A larger value of $\sigma'$ results in a larger $\Delta W$, causing the model to take bigger steps in search of an appropriate solution. In R-GAN with CIDP, the weight update is particularly sensitive to the activation function. This is due to the fact that $83.98\%$ of the input data for R-GAN with CIDP falls within the range of $10^{-8}$ to $10^{-5}$, which is extremely tiny. Small weight changes (e.g., changes at the $10^{-2}$ level) have a considerable impact relative to the input data. Consequently, even small weight adjustments can significantly affect the model's outputs. Consider a generator initially producing numbers around $0.5$, while an ideal generator tends to generate numbers between $10^{-8}$ and $10^{-5}$. The initial generator tends to shift towards outputting numbers within the range of $10^{-8}$ to $10^{-5}$. However, due to the large step size, the model can easily generate data smaller than $10^{-8}$. Furthermore, when generating very small data, the generator mistakenly considers it to be of high quality. The discriminator categorizes these small data as real samples since the small data corresponds to the smallest value in the training dataset. For example, if the model outputs a $y<10^{-13}$, it is then denormalized to the transaction amount $[a] = [(max(X) - min(X)) \times y + min(X)] = min(X)$, i.e., the minimum value in the training dataset. The discriminator always assigns a true label to the minimum value since it is a genuine data point. As a result, the generator becomes unable to update the weights at a fine-grained level and tends to produce even smaller data. Eventually, the model is constrained to generating data that corresponds solely to the minimum value in the training dataset after denormalization.

An intuitive approach to address the challenges associated with CIDP is to modify the learning rate to mitigate the influence of weight updates on the model's outcomes. However, this approach is challenging in practice because determining when to decrease the learning rate and by how much is not straightforward. Furthermore, a very low learning rate can significantly prolong the training process.

To this end, we introduce a custom activation function called ClipSigmoid. Its main concept is to suppress the gradient during backpropagation when the model generates small data. This is achieved by setting the gradient of a specific segment of Sigmoid to zero. The definition of ClipSigmoid is as follows:
\begin{equation}
  ClipSigmoid(x) = \left\{
    \begin{aligned}
    & 1e^{-20}, Sigmoid(x) \leq 1e-20 \\
    & Sigmoid(x), Sigmoid(x) > 1e-20.
    \end{aligned}
    \right.
\end{equation}
\begin{figure}[t]
\centerline{\includegraphics[width=0.45\textwidth]{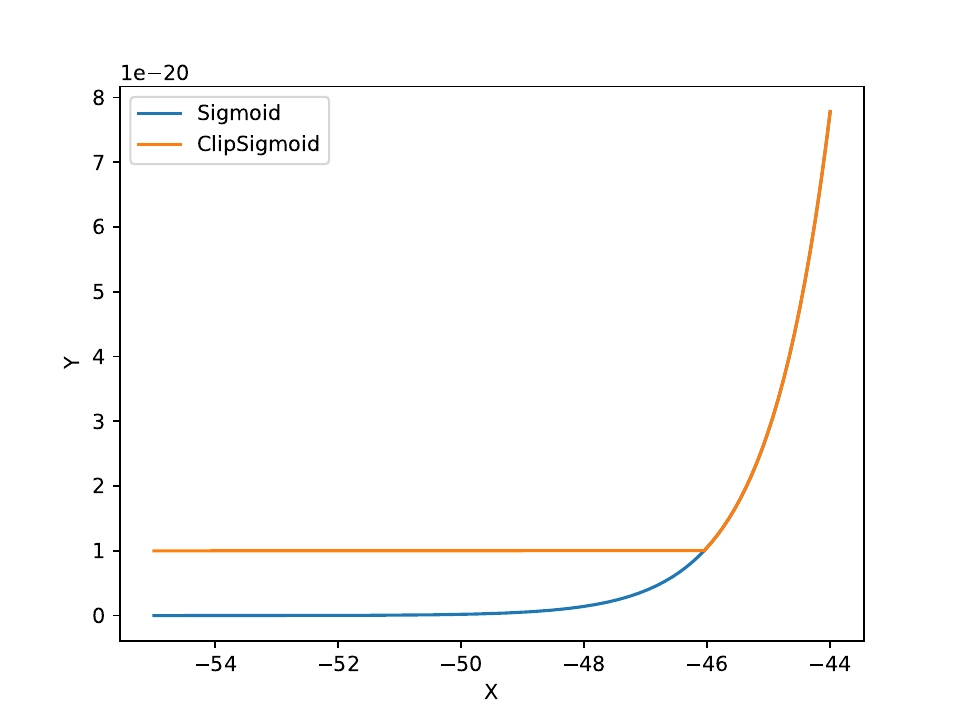}}
\captionsetup{justification=raggedright}
\caption{Different part of ClipSigmoid and Sigmoid.}
\label{fig: ClipSigmoid}
\end{figure}

Fig. \ref{fig: ClipSigmoid} illustrates the distinction between ClipSigmoid and Sigmoid. ClipSigmoid sets the lowest threshold for Sigmoid. When the value of Sigmoid falls below this threshold, ClipSigmoid enforces the value to this threshold. In our case, we set this threshold to $10^{-20}$, which is a hyperparameter established during the model training process. The threshold is set based on experience and is usually the square of the quotient of the minimum and maximum values in the dataset. Whenever the model generates a number below this threshold, the gradient of the weight update in the backpropagation step becomes zero. This effectively prevents the model from pursuing an incorrect solution by eliminating further exploration in that direction.

Note that the activation function is an integral part of R-GAN. It is thus necessary for ClipSigmoid to be reversible, allowing the receiver to compute the corresponding input $x$ from the model's output $y$. Although the zero-segment of ClipSigmoid is an irreversible straight line parallel to the $x$-axis, it does not impact the reversibility of R-GAN. The sender can intentionally train the generator to yield data outside the zero-segment to avoid the irreversible segment. The effectiveness of ClipSigmoid is presented in Section \ref{sec: effect_of_clipsigmoid}.

\section{T2C: Trade Capacity for Concealment}

In CCR-GAN, CIDP essentially reduces rounding error and increases capacity by increasing the magnitude of the dataset, which is defined as the logarithmic value of the largest element in the dataset with a base of 10. However, larger capacity typically results in lower concealment, as confirmed by Table~\ref{tab: CON}. The insight behind is embedding more covert data requires more transaction bits, reducing the number of bits available to resemble normal data. Technically, a larger magnitude means a more uneven dataset and a poorer quality of the trained models, which ultimately leads to a poorer concealment of the generated data. As a result, (CC)R-GAN trained on larger magnitude datasets may lead to very low concealment. For example, suppose that the communicating parties can accept up to 80\% of the recognized accuracy, i.e., the lower bound is 80\%. A particular transaction field supports embedding 40 bits of data in (CC)R-GAN and is recognized with only 90\% accuracy. The recognized accuracy exceeds the upper limit acceptable to the communicating parties. \emph{Hence, the question is, is it possible to balance capacity and concealment at a fine-grained level so that as much data as possible can be embedded within a given recognized accuracy?}

In this section, we propose T2C, a fine-grained approach to balance capacity and concealment. The core idea of T2C, which is inspired by CIDP, is to customize the magnitude of the dataset. When the dataset's magnitude is too large and leads to low concealment, the communicating parties can manually reduce the dataset's magnitude and increase the rounding error. In this way, the capacity is sacrificed to enhance the quality of the trainied model, which ultimately improves the concealment. Next, we show the technical details of T2C, which is implemented by decreasing- and recovering-magnitude algorithms.
\begin{algorithm}[htb]
% \scriptsize
  \caption{Decreasing magnitude.}
  \label{alg: decreasing_magnitude}
  \KwIn{Dataset $X = (x_1, x_2, \cdots, x_n)$. \\
  \quad \quad \quad Reduced magnitude $10^{\lambda}$.}
  \KwOut{Reduced dataset $DX = (dx_1, dx_2, \cdots, dx_n)$.}

  Initialize $DX = (dx_1, dx_2, \cdots, dx_n)$\;
  \For{$dx_i \in DX$}{
      $dx_i = \frac{x_i}{10^\lambda}$\;
  }
  \Return{$DX$}
\end{algorithm}

Algorithm~\ref{alg: decreasing_magnitude} demonstrates the process of decreasing magnitude. Its essence is to reduce each element in the dataset by the same magnitude. In Algorithm~\ref{alg: decreasing_magnitude}, we assume that the reduced magnitude is $10^\lambda$, which is consistent with the expression of $Q$ in Theorem~\ref{the: dp}. This makes it more intuitive for the reader to understand the justification of Theorem~\ref{the: dp} for T2C. In practice, the communicating parties can use 2 as the base number and use $NPID_2$ to measure the rounding error in order to sacrifice the magnitude and increase concealment on a bit-by-bit basis. After decreasing the magnitude, the normalized dataset used for training is more homogeneous compared to before decreasing the magnitude. Therefore, the trained model is of higher quality, and the generated data also possesses a higher degree of concealment.

\begin{algorithm}[htb]
% \scriptsize
  \caption{Recovering magnitude.}
  \label{alg: recovering_magnitude}
  \KwIn{Dataset $X = (x_1, x_2, \cdots, x_n)$.\\
  \quad \quad \quad Reduced magnitude $10^{\lambda}$.\\
  \quad \quad \quad Generated data $\mathbf{A} = (a_1, a_2, \cdots, a_n)$.}
  \KwOut{On-chain data $\hat{\mathbf{A}} = (\hat{a}_1, \hat{a}_2, \cdots, \hat{a}_n)$.}

  Initialize $\hat{\mathbf{A}} = (\hat{a}_1, \hat{a}_2, \cdots, \hat{a}_n)$\;
  Initialize a Key-Value dictionary $D = \{\}$\;
  // \textbf{Count the frequency of the last $\lambda$ digits of each element $x_i$ in the original dataset $X$}\;
  \For{$x_i \in X$}{
    Suppose the last $\lambda$ digits of $x_i$ are $t_i$\;
    \If{$t_i$ already exists in the Key of the dictionary $D$}{
      The corresponding Value $D[t_i] = D[t_i]+1$\;
    }
    \Else($t_i$ never appears in the Key of dictionary $D$){
      Create a new Key $D[t_i]$ and assign $D[t_i]=1$\;
    }
  }
  \For{$a_i \in \mathbf{A}$}{
    Sample from $D$ to get a number $t$ with length $\lambda$\;
    Set $\hat{a}_i = [a_i] * 10^\lambda + t$\;
  }
  \Return{$\hat{\mathbf{A}}$}
\end{algorithm}

Since the magnitude of the dataset used for training is reduced, the model will only output data with smaller magnitude. For example, if the original dataset $X$ is of magnitude $10^{14}-10^{18}$, the dataset used for training becomes of magnitude $10^{12}-10^{16}$ after the magnitude is reduced by $10^2$. The trained model will also output data of magnitude $10^{12}-10^{16}$. The sender needs to recover the data to $10^{14}-10^{18}$ again. Otherwise, the obviously smaller data (between $10^{12}-10^{14}$) can be easily detected by the adversary. Algorithm~\ref{alg: decreasing_magnitude} describes the process of revovering magnitude. For a given original dataset $X$, the sender needs to count the last $\lambda$ digits of all elements in $X$ to form a distribution $D$. For an arbitrarily generated data $a_i$, the sender first rounds $a_i$ (to get $[a_i]$), then samples a digit with $\lambda$-length from the distribution $D$ and concatenates it to $[a_i]$ to obtain a recovering magnitude number $\hat{a_i}$. This $\hat{a_i}$ is the final on-chain data. Note that T2C does not have an impact on the accuracy of the receiver's obtaining on-chain data. This is because the reduced magnitude data generated by the model is not lost, but instead $\lambda$-length redundant digits are added. The receiver only needs to negotiate $\lambda$ with the sender and truncate the last $\lambda$ digits of the on-chain data to get the data generated by the trained model.

In summary, the sender trains the model by first decreasing the magnitude of the dataset, thus increasing the dataset and the quality of the trained model (i.e., increasing the concealment) at the cost of increasing the rounding error (i.e., decreasing the capacity). At this point, the model generates reduced magnitude data. The sender then recovers the magnitude by sampling from the last $\lambda$ digits of the original data. Moreover, when $\lambda$ is negative, T2C can increase the capacity at the expense of concealment. At this point, Algorithm~\ref{alg: decreasing_magnitude} functions to increase the magnitude. The model generates data with larger magnitude. The sender then needs to calculate $\hat{a_i}=[a_i \times 10^\lambda]$ to recover the magnitude.

\section{Experiments}

% We evaluate R-GAN and CCR-GAN by using the output amount of Bitcoin as the expansion field since most schemes do not utilize the transacton amount as the embedding field. This implies that amount-based (CC)R-GAN can significantly enhance the capacity of most existing solutions. For a few schemes that already consider the output amount as the embedded field, (CC)R-GAN can still increase their capacity by utilizing other numeric fields such as fees and gas as expansion fields. Note that we do not conduct additional experiments on these numeric fields because experiments on the transaction amount already demonstrate the feasibility of our schemes. Necessarily, we compare (CC)R-GAN with the existing scheme that uses the Bitcoin amount as the embedding field to show superiority.
% 尽管少数方案会把输出金额当做嵌入字段，这并不意味R-GAN无法提升这些方案的容量。R-GAN可以把其他可指定的数值字段例如手续费作为扩容字段来实现扩容。我们不在其他字段进一步开展实验，因为在金额上的实验已经证实方案可行。必要的是，我们对比所提方案与现有用金额作为嵌入字段的方案，以体现本方案的优越性。

In this section, we first evaluate R-GAN and CCR-GAN using the Bitcoin transaction amount as an example. Then, we apply the proposed method to the Bitcoin transaction fee field as well as to Ethereum to verify the scalability. We further evaluate the capacity and concealment aspects of proposed schemes. Finally, we compare R-GAN and CCR-GAN in terms of capacity and concealment with baselines. This section aims to address the following research questions:
\begin{itemize}
  \item \textbf{RQ1:} How much data can R-GAN and CCR-GAN embed in Bitcoin transaction amounts?
  \item \textbf{RQ2:} Can R-GAN and CCR-GAN be applied to other transaction fields and can they be extended to other blockchains?
  \item \textbf{RQ3:} Since CCR-GAN and T2C can reduce the rounding error by increasing the magnitude of the dataset, is it possible to increase the magnitude infinitely to embed more data?
  \item \textbf{RQ4:} How about the concealment of R-GAN and CCR-GAN?
  \item \textbf{RQ5:} How about the effect of T2C?
  \item \textbf{RQ6:} How to demonstrate the effectiveness of ClipSigmoid?
  \item \textbf{RQ7:} What is the performance of R-GAN and CCR-GAN compared to existing schemes?
\end{itemize}

\subsection{Setup}

Both R-GAN and CCR-GAN are implemented using Python and PyTorch 1.13.1~\cite{pytorch}, and trained on machines with an Intel(R) Core(TM) i5-8265U CPU @ 1.60GHz 1.80 GHz and 8.00 GB RAM. The training process is very fast and can be completed in a few minutes on the CPU.

\subsubsection{Dataset}

We collect a dataset including 84,515 output amounts from Bitcoin transactions downloaded from block $727215$ to block $727239$ in the Bitcoin mainnet. These amounts are measured in satoshi and are represented as integers ranging from $296$ to $2,874,993,345,277$. In CCR-GAN, we discard amounts that cannot form a complete batch of input data based on the batch size and the input dimension. In the case of R-GAN, we include amounts between $10^5$ and $10^8 - 1$ as the training set, and also discard redundant amounts.

\subsubsection{Baselines}

We compare our schemes and four baselines that use fields other than the output amount as the embedding field to evaluate the channel expansion capacity: \textbf{(1) HC-CDE}~\cite{cao2020chain} encodes covert data using the computational relationship between transaction addresses. \textbf{(2) DSA}~\cite{fionov2019exploring} includes schemes that replace random factors in the signing process with covert data. \textbf{(3) Un-UTXO} \cite{gregoriadis2022analysis} encodes covert data as an output address. \textbf{(4) DLchain}~\cite{tian2020dlchain} utilizes the private key as the carrier of covert data. These baselines are widely recognized for high concealment, which makes them suitable for comparison. In addition, we compare Bitcoin amount with (CC)R-GAN to three baselines that also utilize Bitcoin amount as the embedding field, including \textbf{CCMBBT}~\cite{luo2021novel}, \textbf{STCBC}~\cite{zhang2024blockchain}, and \textbf{AMASC}~\cite{tian2024amount}.

\subsubsection{Performance metrics}

We adopt the following three metrics as performance metrics. 

\begin{itemize}
  \item \textbf{Absolute capacity (AC)} indicates the amount of data that can be embedded in each expansion field. It is defined as:
  \begin{equation}
    \mathsf{AC} = \frac{n}{N_f} \ \  bit,
  \end{equation}
  where $N_f$ is the number of expansion fields, and $n$ is the number of bits of covert data carried by these expansion fields. 

  \item \textbf{Capacity expansion rate (CER)} refers to the capacity by which the expansion covert channel enhances the original covert channel. It is computed by:
  \begin{equation} \label{equ: CER}
    \mathsf{CER} = \frac{\mathsf{AC}_{ecc}}{\mathsf{AC}_{occ}} \times 100\%,
  \end{equation}
  where $\mathsf{AC}_{ecc}$ represents $\mathsf{AC}$ of the expansion covert channel, and $\mathsf{AC}_{occ}$ represents $\mathsf{AC}$ of the original covert channel. 

  \item \textbf{Concealment.} Concealment refers to the ability of embedding/expansion fields to remain undetected. We evaluate the concealment using the accuracy, precision, recall, and F1\_score of CTR model~\cite{wang2022practical}, which is a steganalysis model used to detect the presence of covert data within a transaction field. A concealment score closer to $0.5$ indicates a higher level of concealment.
\end{itemize}

\subsubsection{Parameter setting} 

The batch size is set to 10, and the input/hidden/output dimensions are set to 64. LeakyReLU are with $\alpha = 0.3$. The learning rate of both R-GAN and CCR-GAN is set to 0.001, while the learning rate of CCR-GAN decreases to 0.9 times the original value after every two epochs. CCR-GAN stops training when the loss does not decrease in 10 consecutive epochs, while R-GAN stops training when that does not decrease in 5 consecutive epochs. The amount values are normalized using the maximum-minimum normalization method. All parameters, including weights, biases, and input/output data, are of type float64 to minimize the computational precision error.

% \subsection{Self-comparison Experiments} \label{sec: self_exp}
\subsection{Capacity on Bitcoin Amount} \label{sec: self_exp}

% We compare $\mathsf{AC}$ and concealment of R-GAN and CCR-GAN, respectively. Recall that only $m$ bits in each noise element are used to store covert data. We begin by conducting experiments to determine the appropriate value of $m$ and then evaluate the performance.

% \subsubsection{Experiments on $\mathsf{AC}$} 

\textbf{We begin by evaluating $\bm{\mathsf{AC}}$ in terms of Bitcoin transaction amount, with the first step being to determine a rough value for $\bm{m}$.} We calculate $m = \text{min}(NPID_2(\hat{x}_i, x_i)) - 1$, where $i \in \{1, 2, \cdots, M\}$ where $NPID_2$ denotes the binary form of $NPID$ and $M$ denotes the input dimension. The first $m+1$ bits of each element in $\hat{\mathbf{X}}$ and the corresponding bits in $\mathbf{X}$ should be identical to ensure accurate recovery of the $m$-bit covert data by the receiver. When obtaining a trained model, we need to measure the approximate $m$ in the following manner.

Given a trained model, we input a completely random noise $\mathbf{X}$, calculate the reduced noise $\hat{\mathbf{X}}$ obtained by the receiver, and compute the minimum recovery bit as $min(NPID_2(\hat{x}_i, x_i))$. To determine the appropriate range of $m$, we repeat the above steps 10,000 times. Experimental results are presented in Fig. \ref{fig: recovery_bit}. We focus on the highest number of recovered bits. For R-GAN, 1 out of 10,000 experiments is able to recover 12 bits of noise. For CCR-GAN, 6 out of 10,000 experiments can recover 25 bits of noise. The results indicate that R-GAN and CCR-GAN have the potential to allow the receiver to recover approximately 11 and 25 bits of covert data, respectively. Meanwhile, the sender must consider the computational resources and time required during the data encoding's verification phase. 
% We determine the accurate value of $m$ in subsequent experiments based on the time consumption.
\begin{figure}[t]
  \centering
  \subfigure[Recovery bits for R-GAN.]{
      \begin{minipage}[t]{0.45\linewidth}
          \centering
      \includegraphics[width=1.7in]{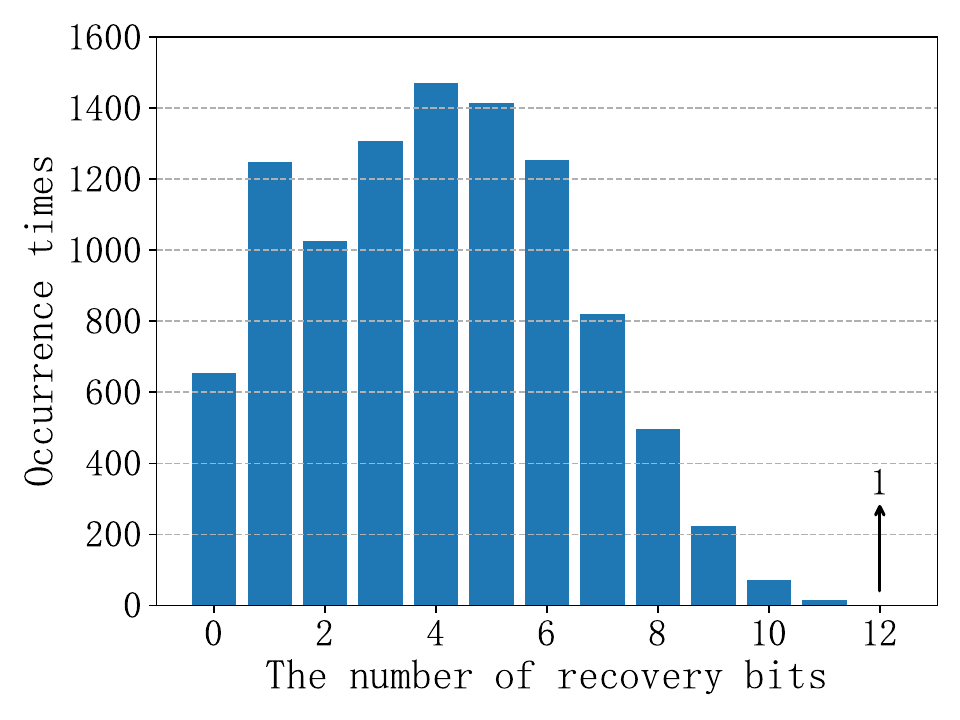}
      \label{fig: R_GAN_recovery_bit}
  \end{minipage}
  }
  \subfigure[Recovery bits for CCR-GAN.]{
      \begin{minipage}[t]{0.45\linewidth}
          \centering
      \includegraphics[width=1.7in]{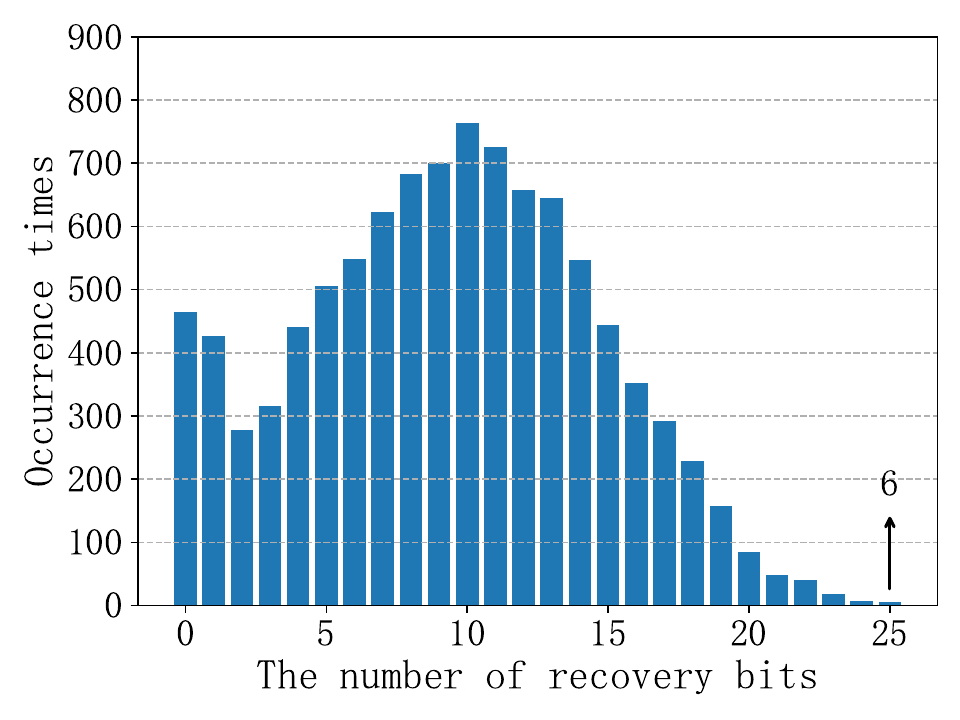}
      \label{fig: CCR_GAN_recovery_bit}
      \end{minipage}
  }
  \caption{Experiments on recovery bits (including 1-bit MSB).}
  \label{fig: recovery_bit}
\end{figure}
\begin{figure}[t]
  \centering
  \subfigure[Test $m$ for R-GAN.]{
      \begin{minipage}[t]{0.45\linewidth}
          \centering
      \includegraphics[width=1.7in]{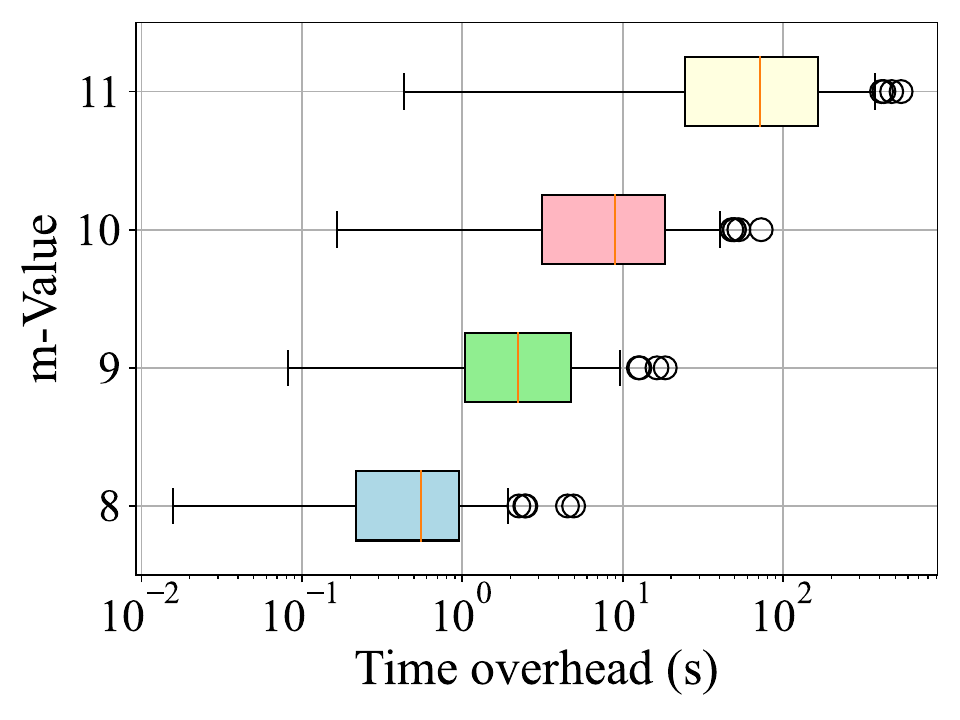}
      \label{fig: R_GAN_m}
  \end{minipage}
  }
  \subfigure[Test $m$ for CCR-GAN.]{
      \begin{minipage}[t]{0.45\linewidth}
          \centering
      \includegraphics[width=1.7in]{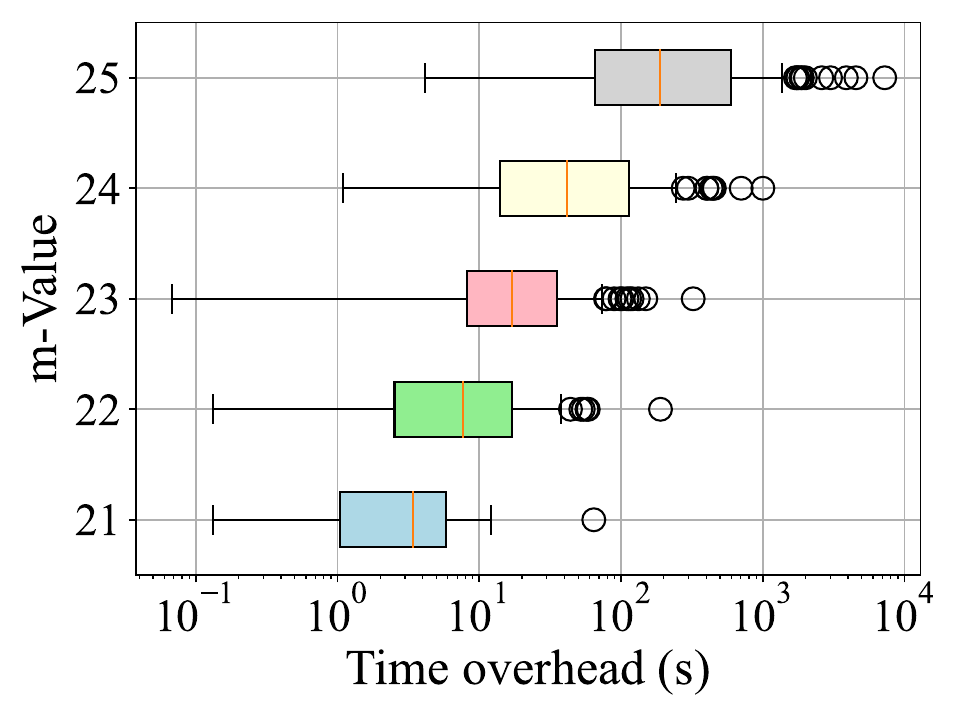}
      \label{fig: CCR_GAN_m}
      \end{minipage}
  }
  \caption{Time overhead to find a satisfying noise vector.}
  \label{fig: time_m}
\end{figure}

\begin{table*}[htb]
  \scriptsize
  \captionsetup{justification=raggedright}
  \caption{Time overhead required to find a satisfying noise and noise vector for R-GAN and CCR-GAN.}
  \begin{center}
  \begin{tabular}{c c c c c c c c c c c c}
  \hline
  \multirow{2}{*}{\textbf{Field}} & \multirow{2}{*}{\textbf{Scheme}} & \multirow{2}{*}{\textbf{Dataset magnitude}} & \multirow{2}{*}{$\bm{m}$} & \multicolumn{4}{c}{\textbf{Time overhead for a vector (s)}} & \multicolumn{4}{c}{\textbf{Time overhead for an amount (s)}} \\
  \cline{5-12}
   & & & & 1st Quartile & 2nd Quartile & 3rd Quartile & Average & 1st Quartile & 2nd Quartile & 3rd Quartile & Average \\
  \hline
  \multirow{8}{*}{\shortstack{Bitcoin \\ amount}} & \multirow{4}{*}{R-GAN} & \multirow{4}{*}{$10^4\sim10^7$} & $8$  & $0.216 $ & $0.552 $ & $0.959  $ & $0.736  $ & $0.003$ & $0.003$ & $0.015$ & $0.011$ \\
                                                  &                        &                                 & $9$  & $1.033 $ & $2.215 $ & $4.748  $ & $3.405  $ & $0.016$ & $0.035$ & $0.074$ & $0.053$ \\
                                                  &                        &                                 & $10$ & $3.160 $ & $8.934 $ & $18.430 $ & $13.215 $ & $0.049$ & $0.140$ & $0.288$ & $0.206$ \\
                                                  &                        &                                 & $11$ & $24.365$ & $72.127$ & $\bm{165.125}$ & $119.866$ & $0.381$ & $1.127$ & $\bm{2.580}$ & $1.873$ \\
  \cline{2-12}
                                  & \multirow{5}{*}{CCR-GAN} & \multirow{5}{*}{$10^2\sim10^{13}$} & $21$ & $1.038 $ & $3.419  $ & $5.852  $ & $4.385  $ & $0.016$ & $0.053$ & $0.091$ & $0.068$ \\
                                  &                          &                                    & $22$ & $2.527 $ & $7.677  $ & $17.155 $ & $13.766 $ & $0.039$ & $0.120$ & $0.268$ & $0.215$ \\
                                  &                          &                                    & $23$ & $8.170 $ & $17.055 $ & $35.379 $ & $31.931 $ & $0.128$ & $0.266$ & $0.553$ & $0.499$ \\
                                  &                          &                                    & $24$ & $14.022$ & $41.502 $ & $\bm{113.695}$ & $100.168$ & $0.219$ & $0.648$ & $\bm{1.776}$ & $1.565$ \\
                                  &                          &                                    & $25$ & $65.929$ & $188.369$ & $594.831$ & $584.567$ & $1.030$ & $2.943$ & $9.294$ & $9.134$ \\
  \hline
  \multirow{6}{*}{\shortstack{Bitcoin \\ fee}} & \multirow{2}{*}{R-GAN} & \multirow{2}{*}{$10^2\sim10^4$} & $1$ & $0.135  $ & $0.451  $ & $0.893    $ & $0.943    $ & $0.002$ & $0.007$ & $0.014$ & $0.015$ \\
                                               &                        &                                 & $2$ & $30.306 $ & $74.233 $ & $\bm{199.201}  $ & $183.078  $ & $0.474$ & $1.160$ & $\bm{3.113}$ & $2.861$ \\
  \cline{2-12}
                                               & \multirow{4}{*}{CCR-GAN} & \multirow{4}{*}{$10^2\sim10^7$} & $1$ & $0.224 $ & $0.524  $ & $1.737  $ & $1.777  $ & $0.004$ & $0.008$ & $0.027$ & $0.028$ \\
                                               &                          &                                 & $2$ & $1.669 $ & $4.670  $ & $9.271  $ & $6.251  $ & $0.026$ & $0.073$ & $0.145$ & $0.098$ \\
                                               &                          &                                 & $3$ & $4.126 $ & $11.812 $ & $28.032 $ & $26.553 $ & $0.064$ & $0.185$ & $0.438$ & $0.415$ \\
                                               &                          &                                 & $4$ & $25.869$ & $65.478 $ & $\bm{156.477}$ & $136.903$ & $0.404$ & $1.023$ & $\bm{2.445}$ & $2.139$ \\
  \hline
  \multirow{8}{*}{\shortstack{Ethereum \\ amount}} & \multirow{4}{*}{R-GAN} & \multirow{4}{*}{$10^{14}\sim10^{18}$} & $38$ & $0.094 $ & $0.196 $ & $0.351  $ & $0.259  $ & $0.001$ & $0.003$ & $0.005$ & $0.004$ \\
                                                   &                        &                                       & $39$ & $0.906 $ & $2.139 $ & $4.988  $ & $3.457  $ & $0.014$ & $0.033$ & $0.078$ & $0.054$ \\
                                                   &                        &                                       & $40$ & $5.493 $ & $11.997 $ & $19.949 $ & $16.385 $ & $0.086$ & $0.187$ & $0.312$ & $0.256$ \\
                                                   &                        &                                       & $41$ & $18.853$ & $46.033$ & $\bm{89.295}$ & $72.827$ & $0.295$ & $0.719$ & $\bm{1.395}$ & $1.138$ \\
  \cline{2-12}
                                                   & \multirow{4}{*}{CCR-GAN} & \multirow{4}{*}{$10^3\sim10^{23}$} & $37$ & $0.154  $ & $0.271  $ & $0.531   $ & $0.427   $ & $0.002$ & $0.004$ & $0.008$ & $0.007$ \\
                                                   &                          &                                    & $38$ & $0.452  $ & $1.110  $ & $2.209   $ & $1.450   $ & $0.007$ & $0.017$ & $0.035$ & $0.023$ \\
                                                   &                          &                                    & $39$ & $2.818  $ & $5.669  $ & $11.687  $ & $7.874   $ & $0.044$ & $0.089$ & $0.183$ & $0.123$ \\
                                                   &                          &                                    & $40$ & $47.736 $ & $93.226 $ & $\bm{163.406} $ & $113.912 $ & $0.745$ & $1.457$ & $\bm{2.553}$ & $1.780$ \\
  \hline
  \multirow{8}{*}{\shortstack{Ethereum \\ fee}} & \multirow{4}{*}{R-GAN} & \multirow{4}{*}{$10^9\sim10^{11}$} & $19$  & $0.167 $ & $0.382  $ & $0.887  $ & $0.687  $ & $0.003$ & $0.006$ & $0.014$ & $0.011$ \\
                                                &                        &                                    & $20$  & $0.583 $ & $1.287  $ & $2.671  $ & $2.029  $ & $0.009$ & $0.020$ & $0.042$ & $0.032$ \\
                                                &                        &                                    & $21$  & $2.505 $ & $6.327  $ & $12.003 $ & $10.380 $ & $0.039$ & $0.099$ & $0.188$ & $0.162$ \\
                                                &                        &                                    & $22$  & $11.803$ & $32.450 $ & $\bm{71.229}$ & $48.917$ & $0.184$ & $0.507$ & $\bm{1.113}$ & $0.764$ \\
  \cline{2-12}
                                & \multirow{4}{*}{CCR-GAN} & \multirow{4}{*}{$10^9\sim10^{13}$} & $22$ & $0.408 $ & $0.740  $ & $1.467  $ & $1.018  $ & $0.006$ & $0.012$ & $0.023$ & $0.016$ \\
                                &                          &                                    & $23$ & $0.984 $ & $2.338  $ & $4.396  $ & $4.096  $ & $0.015$ & $0.037$ & $0.069$ & $0.064$ \\
                                &                          &                                    & $24$ & $5.095 $ & $16.498 $ & $33.353 $ & $27.780 $ & $0.080$ & $0.258$ & $0.521$ & $0.434$ \\
                                &                          &                                    & $25$ & $27.642$ & $53.189 $ & $\bm{136.863}$ & $97.454$ & $0.432$ & $0.831$ & $\bm{2.138}$ & $1.523$ \\
  \hline
  \end{tabular}
  \label{tab: time_m}
  \end{center}
  \end{table*}

\textbf{The second step is to determine the accurate value of $\bm{m}$ for R-GAN and CCR-GAN based on the time consumption.} When embedding certain covert data, the variability of the noise is reduced compared to the experiments in the last step. The covert data is embedded in bits $2$ to $m+1$ of the noise, meaning these $m$ bits are fixed and cannot be randomized. Furthermore, these $m$ bits occupy the more significant positions which have a greater influence on the overall noise value. When significant bits are fixed, it may take more time (or may not even be possible) to find a satisfactory noise vector. We select ``The Little Prince" as transmitted data to align with CCMBBT, and the covert data is encrypted using AES-CBC.

In the case of R-GAN, we assess the time overhead of finding 100 noise vectors for different covert data where the values of $m$ ranging from 8 to 11. Fig. \ref{fig: R_GAN_m} illustrates the results. The time overhead of finding a satisfactory noise vector increases exponentially as $m$ increases. When $m$ reaches 11, it takes approximately 100 seconds on average to find a suitable noise vector (including 64 noises). 
For CCR-GAN, we adopt the same evaluation criteria as CCR-GAN, while $m$ varies from 21 to 25. Results are presented in Fig. \ref{fig: CCR_GAN_m}. When $m$ is within the range of 21 to 23, the time overhead remains relatively consistent. When $m$ exceeds 24, the time overhead begins to increase significantly. The majority of boxes in Fig. \ref{fig: time_m} are positioned close to the maximum point, indicating that most results fall within the range near the maximum.

Table \ref{tab: time_m} provides more detailed time consumption for finding a satisfying noise vector and noise. All the means are higher than the medians, verifying that a majority of the results are close to the maximum. If the time consumption of the upper quartile is deemed acceptable, we consider the scheme to be feasible. For R-GAN with $m=11$, the sender typically spends 165.125 and 2.580 seconds to find a satisfying noise vector and noise, which is deemed acceptable. In the case of CCR-GAN, the sender can accept the time overhead at $m=24$. When $m=25$, it may take 594.831 seconds and 9.134 seconds to find a noise vector and noise, which is unacceptable.

% To summarize, R-GAN and CCR-GAN enable the embedding of 11 and 24 bits of covert data in each transaction amount, while maintaining an acceptable time overhead. We use these criteria for all subsequent experiments.

\begin{tcolorbox}[colframe=black, colback=gray!10, boxrule=1pt,width=\linewidth]
  \textbf{Answer to RQ1:} R-GAN and CCR-GAN allow for 11-bit and 24-bit covert data to be embedded in each Bitcoin transaction amount and are capable of generating a data-carrying amount in less than 3 seconds.
\end{tcolorbox}

\subsection{Scalability}

\textbf{Then, we verify the scalability. Specifically, we apply R-GAN and CCR-GAN to the Bitcoin fee, the Ethereum amount, and the Ethereum fee, and use the same experimental approach to determine the value of $\bm{m}$.} We chose the amount and fee for evaluation because the values of these fields are completely specified by the sender. Futhermore, most schemes do not use the amount and fee as the embedding field, which means that R-GAN and CCR-GAN can effectively boost the capacity of most existing schemes. We capture 80,000 amounts and 80,000 fees from the real Bitcoin and Ethereum networks to construct the dataset, and Table~\ref{tab: time_m} shows the experimental results.

\textbf{For Bitcoin fee}, R-GAN and CCR-GAN only support embedding 2-bit and 4-bit data. R-GAN's dataset is in the range of $100$ to $9999$, and the amount of data that can be embedded is consequently small according to Theorem~\ref{the: dp}. CCR-GAN's dataset has a maximum of $10^7$ magnitude, and is able to embed a slightly larger amount of data.

\textbf{For Ethereum amount}, R-GAN and CCR-GAN are capable of embedding 41 and 40 bits of data. Despite the significant increase in the magnitude of the CCR-GAN dataset, the amount of data that can be embedded does not improve. This does not mean that CCR-GAN fails, but rather because the precision error of the computer determines the upper limit of the amount of data embedded. In our experiments, we use the highest precision floating-point type in pytorch, float64, which follows the IEEE 754 double-precision floating-point standard and is accurate to approximately $17$ decimal places. Thus, when the magnitude of the largest value in the dataset reaches $10^{17}$, continuing to increase the magnitude will not increase the amount of data embedded, as the computer's precision error limits the accuracy of the data recovered by the receiver. To verify the conclusion, we divide all the data in the dataset by a certain power of $10$ to reduce the magnitude and train the R-GAN with the reduced magnitude data, and then determine the value of $m$. The results are shown in Table~\ref{tab: eth_amount_improve}.
\begin{table*}[htb]
  \scriptsize
  \captionsetup{justification=raggedright}
  \caption{Time overhead required to find a satisfying noise and noise vector for R-GAN mode-Ethereum amount when reducing magnitude at different $m$.}
  \begin{center}
  \begin{tabular}{c c c c c c c c c c c}
  \hline
  \multirow{2}{*}{\textbf{Reduced magnitude}} & \multirow{2}{*}{\textbf{Dataset magnitude}} & \multirow{2}{*}{$\bm{m}$} & \multicolumn{4}{c}{\textbf{Time overhead for a vector (s)}} & \multicolumn{4}{c}{\textbf{Time overhead for an amount (s)}} \\
  \cline{4-11}
   & & & 1st Quartile & 2nd Quartile & 3rd Quartile & Average & 1st Quartile & 2nd Quartile & 3rd Quartile & Average \\
  \hline
  \multirow{4}{*}{$10^1$} & \multirow{4}{*}{$10^{13}\sim10^{17}$} & $37$ & $0.417  $ & $1.103  $ & $1.931  $ & $1.441  $ & $0.007$ & $0.017$ & $0.030$ & $0.023$ \\
                          &                                       & $38$ & $0.491  $ & $1.237  $ & $2.212  $ & $1.641  $ & $0.008$ & $0.019$ & $0.035$ & $0.026$ \\
                          &                                       & $39$ & $10.133 $ & $24.769 $ & $47.629 $ & $35.372 $ & $0.158$ & $0.387$ & $0.744$ & $0.553$ \\
                          &                                       & $40$ & $23.043 $ & $62.644$ & $\bm{156.099}$ & $101.800$ & $0.360$ & $0.979$ & $\bm{2.439}$ & $1.591$ \\
  \hline
  \multirow{4}{*}{$10^2$} & \multirow{4}{*}{$10^{12}\sim10^{16}$} & $35$ & $1.275  $ & $2.774   $ & $7.195   $ & $4.516    $ & $0.020$ & $0.043$ & $0.112$ & $0.071$ \\
                          &                                       & $36$ & $1.014  $ & $3.907   $ & $8.287   $ & $5.822    $ & $0.016$ & $0.061$ & $0.129$ & $0.091$ \\
                          &                                       & $37$ & $5.857  $ & $13.144  $ & $28.562  $ & $20.760   $ & $0.092$ & $0.205$ & $0.446$ & $0.324$ \\
                          &                                       & $38$ & $27.076 $ & $79.351 $ & $\bm{197.818} $ & $131.941 $ & $0.423$ & $1.240$ & $\bm{3.091}$ & $2.062$ \\
  \hline
  \multirow{4}{*}{$10^3$} & \multirow{4}{*}{$10^{11}\sim10^{15}$} & $32$ & $1.878   $ & $6.480   $ & $11.285  $ & $8.221   $ & $0.029$ & $0.101$ & $0.176$ & $0.128$ \\
                          &                                       & $33$ & $4.160   $ & $13.286  $ & $27.916  $ & $19.867  $ & $0.065$ & $0.208$ & $0.436$ & $0.310$ \\
                          &                                       & $34$ & $10.483  $ & $39.158  $ & $65.814  $ & $52.042  $ & $0.164$ & $0.612$ & $1.028$ & $0.813$ \\
                          &                                       & $35$ & $71.394  $ & $174.085 $ & $\bm{354.424} $ & $248.024 $ & $1.116$ & $2.720$ & $\bm{5.538}$ & $3.875$ \\
  \hline                          
  \end{tabular}
  \label{tab: eth_amount_improve}
  \end{center}
\end{table*}
When the magnitude of the dataset is reduced by $10^1$, the magnitude of the dataset's maximum value is $10^{17}$. According to Theorem~\ref{the: dp}, at this point the rounding step can accurately retain 17 digits after the decimal point, and this precision is consistent with the maximum precision of pytorch's float64-types numbers. Therefore, the amount of data that can be embedded theoretically will not decrease. The experimental results in Table~\ref{tab: eth_amount_improve} show that 40 bits of data can still be embedded. The experimental results are consistent with Theorem~\ref{the: dp}. 
As the magnitude of the largest data in the dataset continues to decrease to $10^{16}$, even though the computer is able to accurately compute numbers to 17 decimal places, the rounding step is only able to ensure that the first 16 places are accurate, thus reducing the amount of data that can be embedded to 38 bits.
Continuing to reduce the magnitude of the dataset, the precision of rounding gradually decreases, and the amount of data that can be embedded also gradually decreases, which also confirms our conclusion.

\textbf{The Ethereum fee} can also be applied to R-GAN and CCR-GAN, and CCR-GAN is able to embed more data.R-GAN is able to embed 22 bits of data in the Ethereum fee, and CCR-GAN is able to embed 25 bits of data in the Ethernet fee.

\begin{tcolorbox}[colframe=black, colback=gray!10, boxrule=1pt,width=\linewidth]
  \textbf{Answer to RQ2:} R-GAN and CCR-GAN are scalable and able to be applied in numerical transaction fields of various blockchains.
\end{tcolorbox}

A finding that further supports Theorem~\ref{the: dp} is that the amount of embedded data ($m$-value) is determined by the magnitude of the dataset's maximum data. For Bitcoin amount and Ethereum fee, the larger the dataset magnitude, the larger the amount of data that can be embedded, and when the magnitude is the same (CCR-GAN for Bitcoin amount and CCR-GAN for Ethereum fee), the amount of data that can be embedded is also the same (24 bits). One exception is that CCR-GAN for Bitcoin fee and R-GAN for Bitcoin amount have the same maximum magnitude ($10^7$), but the data embedding for Bitcoin fee is very small. This is due to the fact that more than 95\% of the Bitcoin fees are clustered between $10^3$ and $10^5$, with no more than $10^5$ non-repeating values. The size of the dataset is 80,000, which already covers all non-repeating values. All possible fetches had been fed into the discriminator during the training process, so the discriminator can easily distinguish between real and fake data. Once the amount of embedded data is slightly larger (even 1 bit), the samples carrying data are easily recognized by the discriminator. The ability of the trained generator to embed data is thus poor.

Another phenomenon is that there is almost no difference between R-GAN and CCR-GAN in Ethereum amount in terms of data embedding ability. This is because when the magnitude of the largest data in the dataset reaches $10^{17}$, the computer precision error becomes the main error in the data decoding process. The essence of increasing the magnitude is to increase the precision of the rounding process, but it is limited by the precision of the computer floating-point numbers, which can only be accurately calculated to 17 decimal places at most.

\begin{tcolorbox}[colframe=black, colback=gray!10, boxrule=1pt,width=\linewidth]
  \textbf{Answer to RQ3:} It is only able to increase the magnitude of the dataset in a limited range (within $10^{17}$) to enhance the capacity. When the magnitude reaches $10^{17}$, the computational accuracy of the computer limits the accuracy of data decoding.
\end{tcolorbox}

\subsection{Concealment}

\textbf{We evaluate the concealment using CTR.} The concealment experiments follow the same settings as~\cite{wang2022practical}, except for the dataset size. We use larger datasets to ensure reliable results. We generate 80,000 transaction fileds using R-GAN and CCR-GAN, respectively. These data-carrying fields, along with an equal number of normal transaction fields, form the CTR dataset. 
For each configuration, we perform 10 experiments and averaged the results. Experiment results are summarized in Table~\ref{tab: CON}. 

For the same transaction field, the samples generated by CCR-GAN are more easily recognized than those generated by R-GAN. Technically, the fact that CCR-GAN does not eliminate very large and very small outliers results in an extremely uneven distribution of the normalized dataset used to train CCR-GAN. This in turn reduces the quality of the trained CCR-GAN model and hence the generated data is more easily recognized. In particular, in the R-GAN scheme for Ethereum amount, simply changing the distribution of the training dataset is also able to reduce the accuracy with which the generated samples are recognized (from 0.853 to 0.785). Intuitively, as more bits are used to store covert data, fewer bits are available to match the features of normal amounts, resulting in a higher accuracy of being recognized. 

For different transaction fields, the accuracy of being recognized varies even if the amount of embedded data is the same. For example, R-GAN for Bitcoin amount and CCR-GAN for Ethereum fee are both capable of embedding 24-bit data, while CCR-GAN for Ethereum fee is recognized with lower accuracy. This difference is caused by the distribution of the normalized training dataset. The more evenly the normalized dataset is distributed between 0 and 1, the higher the quality of the dataset, the better the R-GAN (or CCR-GAN) is trained, and the less accurate it is detected. In the CCR-GAN for Bitcoin amount, most of the amounts are between $10^{4}$ and $10^{7}$, and the normalized data is between $10^{-9}$ and $10^{-6}$. In the CCR-GAN for Ethereum fee, most of the fees are between $10^{9}$ and $10^{11}$, and the normalized data is between $10^{-4}$ and $10^{-2}$. The normalized Ethereum fee are more evenly distributed than the Bitcoin amount, so the less accuracy the generated samples is to be recognized. In particular, experiments with the R-GAN for Ethereum amount illustrate that when the distributions of the two normalized datasets are close to each other, the smaller the amount of data embedded, the lower the accuracy of being recognized. This is also consistent with the intuition that the more amount of embedded data the more likely it is to expose potential features. Furthermore, the Bitcoin fee is an exception because its number of possible values ($10,000$) is much smaller than the size of the training dataset of R-GAN (or CCR-GAN) ($80,000$), and the discriminator can label almost every sample in an enumerative manner.

% To summarize, CCR-GAN provides more efficiency and less concealment than R-GAN. In R-GAN and CCR-GAN, concealment can also be increased at the expense of capacity by setting a reduced magnitude. Bitcoin amount, Bitcoin fee, and Ethereum fee are all recognized with an accuracy of less than $0.8$ and possess a high degree of concealment. Practical utilization of Ethereum amount steganography requires a reduction in the amount of embedded data to increase concealment.
\begin{tcolorbox}[colframe=black, colback=gray!10, boxrule=1pt,width=\linewidth]
  \textbf{Answer to RQ4:} Both R-GAN and CCR-GAN possess acceptable concealment. R-GAN provides higher concealment than CCR-GAN. Specifically, Bitcoin amount, Bitcoin fee, and Ethereum fee are recognized with accuracy of less than $0.8$, possessing a high degree of concealment. 
  % Ethereum amount has lower concealment, but the concealment can be increased through T2C.
\end{tcolorbox}
\begin{tcolorbox}[colframe=black, colback=gray!10, boxrule=1pt,width=\linewidth]
  \textbf{Answer to RQ5:} T2C can effectively enhance concealment at the expense of capacity. Excluding computer accuracy limitations, for every 10-fold reduction in dataset magnitude, the capacity is reduced by 2-3 bits and the concealment is enhanced by 3\%-4\%.
\end{tcolorbox}

\begin{table}[htb]
\scriptsize
\renewcommand{\arraystretch}{1.3}
\captionsetup{justification=raggedright}
\caption{Recognition results of CTR. The dataset is composed of positive and negative samples with a ratio of 1:1, and is divided into a training set and a test set with a ratio of 7:3. Each set consists of half the normal field and the data-carrying field. The closer the result is to $0.5$, the more concealment the scheme possesses.}
\begin{center}
\begin{tabular}{c c c c c c}
\hline
\multirow{2}{*}{\textbf{Field}} & \multirow{2}{*}{\textbf{Scheme}} & \multirow{2}{*}{\textbf{\shortstack{Reduced \\ magnitude}}} &\multirow{2}{*}{\textbf{\shortstack{Dataset \\ magnitude}}} & \multirow{2}{*}{$\bm{m}$} & \multirow{2}{*}{\textbf{Accuracy}} \\
 & & & & & \\
\hline
\multirow{2}{*}{\shortstack{Bitcoin \\ amount}} & R-GAN   & / & $10^4\sim10^7$  & $11$ & $0.667$ \\ 
\cline{2-6}
                                                & CCR-GAN & / & $10^2\sim10^{13}$ & $24$ & $0.732$ \\
\hline
\multirow{2}{*}{\shortstack{Bitcoin \\ fee}} & R-GAN   & / & $10^2\sim10^4$  & $2$ & $0.795$ \\ 
\cline{2-6}
                                             & CCR-GAN & / & $10^2\sim10^7$  & $4$ & $0.796$ \\
\hline
\multirow{5}{*}{\shortstack{Ethereum \\ amount}} & \multirow{4}{*}{R-GAN}   & /      & $10^{14}\sim10^{18}$  & $41$ & $0.853$ \\ 
\cline{3-6}
                                                 &                          & $10^1$ & $10^{13}\sim10^{17}$  & $40$ & $0.851$ \\ 
\cline{3-6}                                                 
                                                 &                          & $10^2$ & $10^{12}\sim10^{16}$  & $38$ & $0.813$ \\ 
\cline{3-6}
                                                 &                          & $10^3$ & $10^{11}\sim10^{15}$  & $35$ & $0.785$ \\ 
\cline{2-6}
                                                 & CCR-GAN                  & /      & $10^{3}\sim10^{23}$   & $40$ & $0.968$ \\
\hline
\multirow{2}{*}{\shortstack{Ethereum \\ fee}} & R-GAN  & / & $10^9\sim10^{11}$  & $21$ & $0.577$ \\ 
\cline{2-6}
                                             & CCR-GAN & / & $10^9\sim10^{13}$  & $24$ & $0.607$ \\
\hline
\end{tabular}
\label{tab: CON}
\end{center}
\end{table}

\subsection{Effectiveness of ClipSigmoid} \label{sec: effect_of_clipsigmoid}

\textbf{We verify the effect of ClipSigmoid by plotting the loss function during training.} We take the Bitcoin amount as an example to verify the effectiveness of ClipSigmod. We keep all other parameters consistent except for the activation function and compare the change in model loss when using Sigmoid and ClipSigmoid as the activation function during the training. Fig. \ref{fig: effective} illustrates the impact of ClipSigmoid. At the 8th epoch, the model without ClipSigmoid (including both the generator and the discriminator) experiences a sudden spike and is difficult to converge again during subsequent training, while the model loss with ClipSigmoid begins to decrease until the model converges. 
% Although ClipSigmoid only modifies a small portion of Sigmoid, it has a significant effect on model convergence.
\begin{figure}[t]
\centerline{\includegraphics[width=0.45\textwidth]{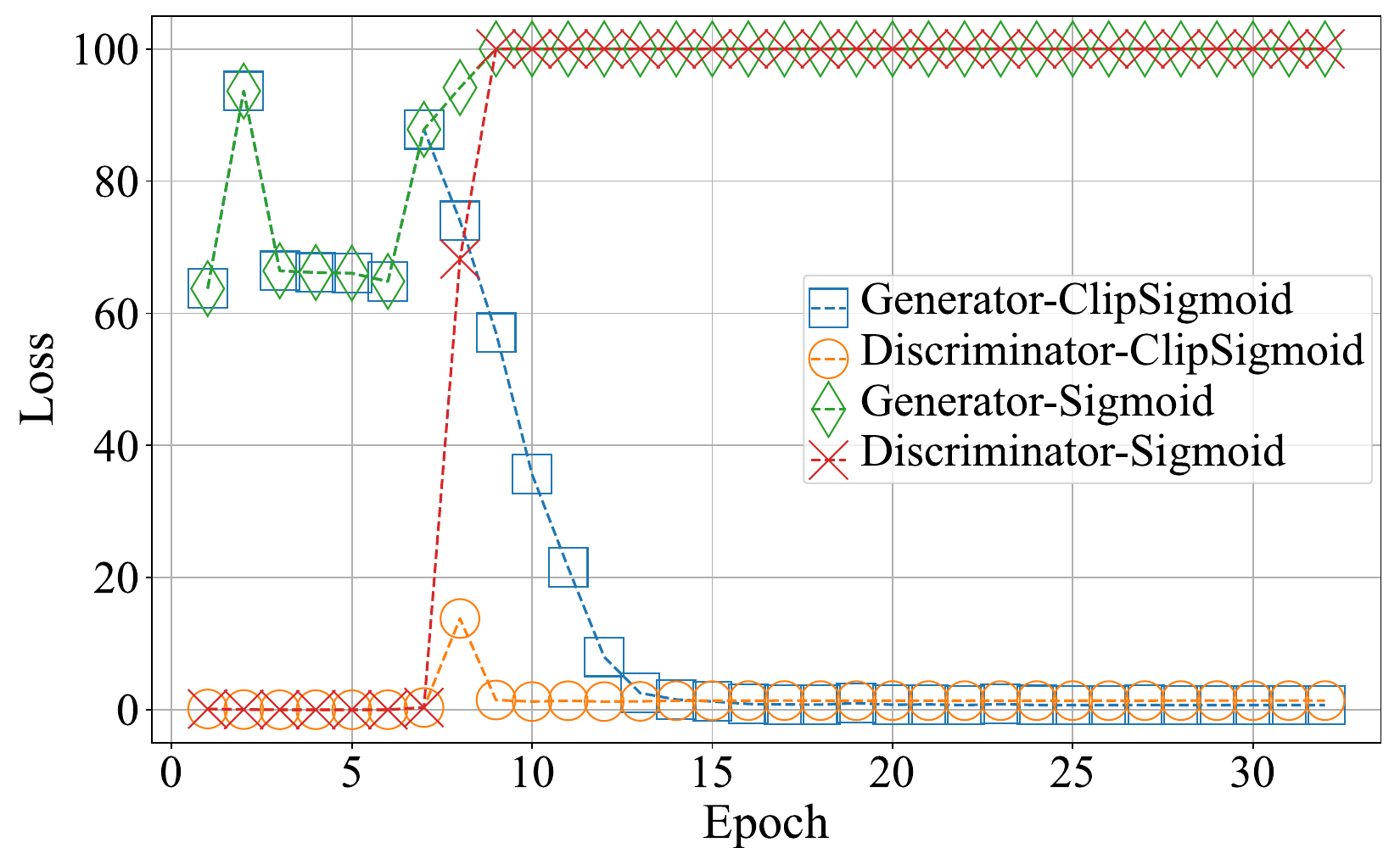}}
\captionsetup{justification=raggedright}
\caption{The effect of ClipSigmoid.}
\label{fig: effective}
\end{figure}

\begin{tcolorbox}[colframe=black, colback=gray!10, boxrule=1pt,width=\linewidth]
  \textbf{Answer to RQ6:} By setting up the above comparison experiments, it can be noticed that the training loss of the model is reduced after using CilpSigmoid, while the model with Sigmoid does not converge.
\end{tcolorbox}

% \subsection{Comparison Experiments}
\subsection{Comparison}

% 分为容量提升和现在的
We use $\mathsf{CER}$ (see Equation~(\ref{equ: CER})) to evaluate the capacity expansion capability of the proposed schemes against blockchain-based covert channels that do not utilize the transaction amount or the transaction fee as the embedding field. For covert channels that utilize the Bitcoin transaction amount as the embedding field, we compare $\mathsf{AC}$ and concealment to demonstrate the superiority of our schemes. We only compare schemes using Bitcoin amount as the embedding field, as there is little research employing Ethereum amount and transaction fee as the embedding field.

\subsubsection{Comparison of the capacity expansion capability}

In Bitcoin, we consider that the transaction is a P2PKH\footnote{Pay-to-Public-Key-Hash, a type of ScriptPubKey which locks Bitcoin to the hash of a public key~\cite{p2pkh}.} transaction with one input and two outputs by default, as this type of transaction accounts for the largest number~\cite{transactionsfeeinfo}. In Ethereum, we consider that the transaction is a transfer transaction with one input and one output. The $\mathsf{AC}$ of both the proposed schemes and baselines is calculated using the above setting. We discard the CCR-GAN scheme for Ethereum amount and choose a reduced magnitude of $10^3$ as the criterion in the R-GAN scheme for Ethereum amount. The reason is that schemes with recognized accuracy greater than $0.8$ are too low in concealment. Table~\ref{tab: CER} presents $\mathsf{CER}$ of proposed schemes. It can be observed that R-GAN can boost capacity up to 291.67\% of baselines and CCR-GAN up to 200\% of baselines. Although the transaction fields generated by R-GAN and CCR-GAN may slightly increase the identification risk, the huge capacity increase makes the risk worthwhile.

\begin{table}[htb]
\scriptsize
\renewcommand{\arraystretch}{1.3}
\captionsetup{justification=raggedright}
\caption{$\mathsf{CER}$ of proposed schemes.}
\begin{center}
\begin{tabular}{c c c c c c}
\hline
\textbf{Field} & \textbf{Scheme} & \textbf{HC-CDE} & \textbf{DSA} & \textbf{Un-UTXO} & \textbf{DLchain} \\
\hline
\multirow{2}{*}{\shortstack{Bitcoin \\ amount}}  & R-GAN   & $91.67\%$ & $4.30\%$ & $6.88\%$ & $8.59\%$ \\
\cline{2-6}
                                                 & CCR-GAN & $200.00\%$ & $9.38\%$ & $15.00\%$ & $18.75\%$ \\
\hline
\multirow{2}{*}{\shortstack{Bitcoin \\ fee}}     & R-GAN   & $16.67\%$ & $0.78\%$ & $1.25\%$ & $1.56\%$ \\
\cline{2-6}
                                                 & CCR-GAN & $33.33\%$ & $1.56\%$ & $2.50\%$ & $3.13\%$ \\
\hline
\multirow{2}{*}{\shortstack{Ethereum \\ amount}} & R-GAN   & $291.67\%$ & $13.67\%$ & $21.88\%$ & $27.34\%$ \\
\cline{2-6}
                                                 & CCR-GAN & $/      $ & $     / $ & $/     $ & $/     $ \\
\hline
\multirow{2}{*}{\shortstack{Ethereum \\ fee}}    & R-GAN   & $175.00\%$ & $8.20\%$ & $13.13\%$ & $16.41\%$ \\
\cline{2-6}
                                                 & CCR-GAN & $200.00\%$ & $9.38\%$ & $15.00\%$ & $18.75\%$ \\
\hline
\end{tabular}
\label{tab: CER}
\end{center}
\end{table}

\begin{table}[htb]
\scriptsize
\captionsetup{justification=raggedright}
\caption{Comparison of Bitcoin amount embedding.}
\begin{center}
\begin{tabular}{c c c c c c}
\hline
\textbf{Scheme} & \textbf{$\mathsf{AC}$} & \textbf{Accuracy} & \textbf{Precision} & \textbf{Recall} & \textbf{F1-score} \\
\hline
CCMBBT  & $23$ & $0.909$ & $0.912$ & $0.909$ & $0.909$ \\
STCBC   & $\sim27$ & $0.911$ & $0.913$ & $0.911$ & $0.911$ \\
AMASC   & $ 8$ & $0.975$ & $0.976$ & $0.975$ & $0.975$ \\
R-GAN   & $11$ & $0.667$ & $0.676$ & $0.667$ & $0.663$ \\
CCR-GAN & $24$ & $0.732$ & $0.733$ & $0.732$ & $0.732$ \\
\hline
\end{tabular}
\label{tab: comparison}
\end{center}
\end{table}

\subsubsection{Comparison of Bitcoin amount embedding}

We also assume that Bitcoin transactions are P2PKH transactions with one input and two outputs. We perform the concealment experiments with the dataset size of 16,000. Comparison results are shown in Table~\ref{tab: comparison}. Compared to CCMBBT, R-GAN has a smaller capacity and exhibits higher concealment. CCR-GAN has a larger capacity to CCMBBT, while its concealment surpasses that of CCMBBT. This is because CCMBBT simply encodes covert data as a number using the ASCII encoding, making their amounts easily distinguishable. R-GAN trades capacity for higher concealment and outperforms CCMBBT in concealment, while CCR-GAN outperforms CCMBBT in both capacity and concealment. The recognized accuracy of STCBC is more than 0.9. Although R-GAN and CCR-GAN have slightly smaller capacity than STCBC, they possess far stronger concealment than STCBC. Both R-GAN and CCR-GAN outperform AMASC in capacity and concealment.

\begin{tcolorbox}[colframe=black, colback=gray!10, boxrule=1pt,width=\linewidth]
  \textbf{Answer to RQ7:} In terms of capacity expansion, R-GAN can boost capacity up to 291.67\% of baselines and CCR-GAN up to 200\% of baselines. In terms of capacity and concealment, R-GAN and CCR-GAN are able to achieve higher capacity with guaranteed concealment, whereas the existing schemes are difficult to satisfy both high capacity and concealment.
\end{tcolorbox}

\section{Related Work}

\noindent \textbf{Blockchain-based covert channels.} In addition to baselines, several works are relevant to blockchain-based covert channels. Partala~\cite{partala2018provably} first builds a covert channel in the blockchain. They propose BLOCCE and demonstrate its security. BLOCCE stores 1-bit covert data into the least significant bit of the address. Gao~\textit{et al.}~\cite{gao2020achieving} and Zhang~\textit{et al.}~\cite{zhang2023ebdl} utilize OP\_RETURN to encode covert data. Zhang~\textit{et al.}~\cite{zhang2021research} construct a covert channel based on the parameters of Ethereum smart contracts. Luo~\textit{et al.}~\cite{luo2021novel} represent bits 0 and 1 by the presence or absence of transactions between addresses, and also encode covert data into the transaction amount. Zhang~\textit{et al.}~\cite{zhang2021approach} employ the Ethereum Whisper protocol to establish a covert channe. Alsalami~\textit{et al.}~\cite{alsalami2020uncontrolled} explore randomness in the blockchain to build covert channels. None of them consider generating required transaction fields, whereas our approaches specifically focus on generating these other fields.

\noindent \textbf{Transaction generation via AI.} Wang \textit{et al.}~\cite{wang2022practical} propose a PCTC model for generating transaction fields. They aim to generate indistinguishable transaction fields and provide a reference for creating covert transactions. Liu~\textit{et al.}~\cite{liu2020research} employ GAN to generate the Ethereum transaction fields and embed covert data. However, the receiver may not be unable to obtain decoded data consistent with the original covert data. Researchers also explore automating smart contract generation using AI~\cite{shao2020lsc, dwivedi2022auto, mavridou2018designing, hu2022scsguard, He2023msdc}, while none of them consider embedding covert data during the generation process. There is currently no research on embedding data while generating transaction fields via AI, whereas our approaches address this problem.

\noindent \textbf{AI-based text steganography.} 
The proposed schemes can also be considered as a form of text steganography. 
Existing research on text steganography focuses on linguistic steganography, where covert data is hidden within language semantics. Yang~\textit{et al.}~\cite{yang2021vae} utilize variational auto-encoders to hide covert data into word selection. Their subsequent work enhances the concealment of generated sentences~\cite{yang2021linguistic}. Li~\textit{et al.}~\cite{li2021topic} achieve steganographic long text generation by encoding covert data into entities and relationships within a knowledge graph. Zhou~\textit{et al.}~\cite{zhou2021linguistic} propose an adaptive embedding algorithm with a similarity function to implement linguistic steganography, ensuring that the embedded distribution remains consistent with the actual distribution. These methods do not apply to embedding covert data into purely numerical transaction fields.

\section{Conclusion}

In this paper, we have introduced GBSF, a generic framework for expanding the channel capacity of blockchain-based steganography. GBSF involves the sender generating indistinguishable required fields while embedding covert data. We have presented the R-GAN scheme, which utilizes GAN with a reversible generator to generate the required fields and encodes covert data as input noise to the GAN generator. Additionally, we have proposed CCR-GAN as an enhancement to R-GAN, and develop T2C to balance capacity and concealment. Experimental results have demonstrated that our proposed schemes outperform baselines.

% \section*{Acknowledgments}
% This should be a simple paragraph before the References to thank those individuals and institutions who have supported your work on this article.

% {\appendix[Proof of the Zonklar Equations]
% Use $\backslash${\tt{appendix}} if you have a single appendix:
% Do not use $\backslash${\tt{section}} anymore after $\backslash${\tt{appendix}}, only $\backslash${\tt{section*}}.
% If you have multiple appendixes use $\backslash${\tt{appendices}} then use $\backslash${\tt{section}} to start each appendix.
% You must declare a $\backslash${\tt{section}} before using any $\backslash${\tt{subsection}} or using $\backslash${\tt{label}} ($\backslash${\tt{appendices}} by itself
%  starts a section numbered zero.)}

%{\appendices
%\section*{Proof of the First Zonklar Equation}
%Appendix one text goes here.
% You can choose not to have a title for an appendix if you want by leaving the argument blank
%\section*{Proof of the Second Zonklar Equation}
%Appendix two text goes here.}

\bibliographystyle{IEEEtran}
\bibliography{IEEEabrv,mybib.bib}

\begin{IEEEbiography}[{\includegraphics[width=1in,height=1.25in,clip,keepaspectratio]{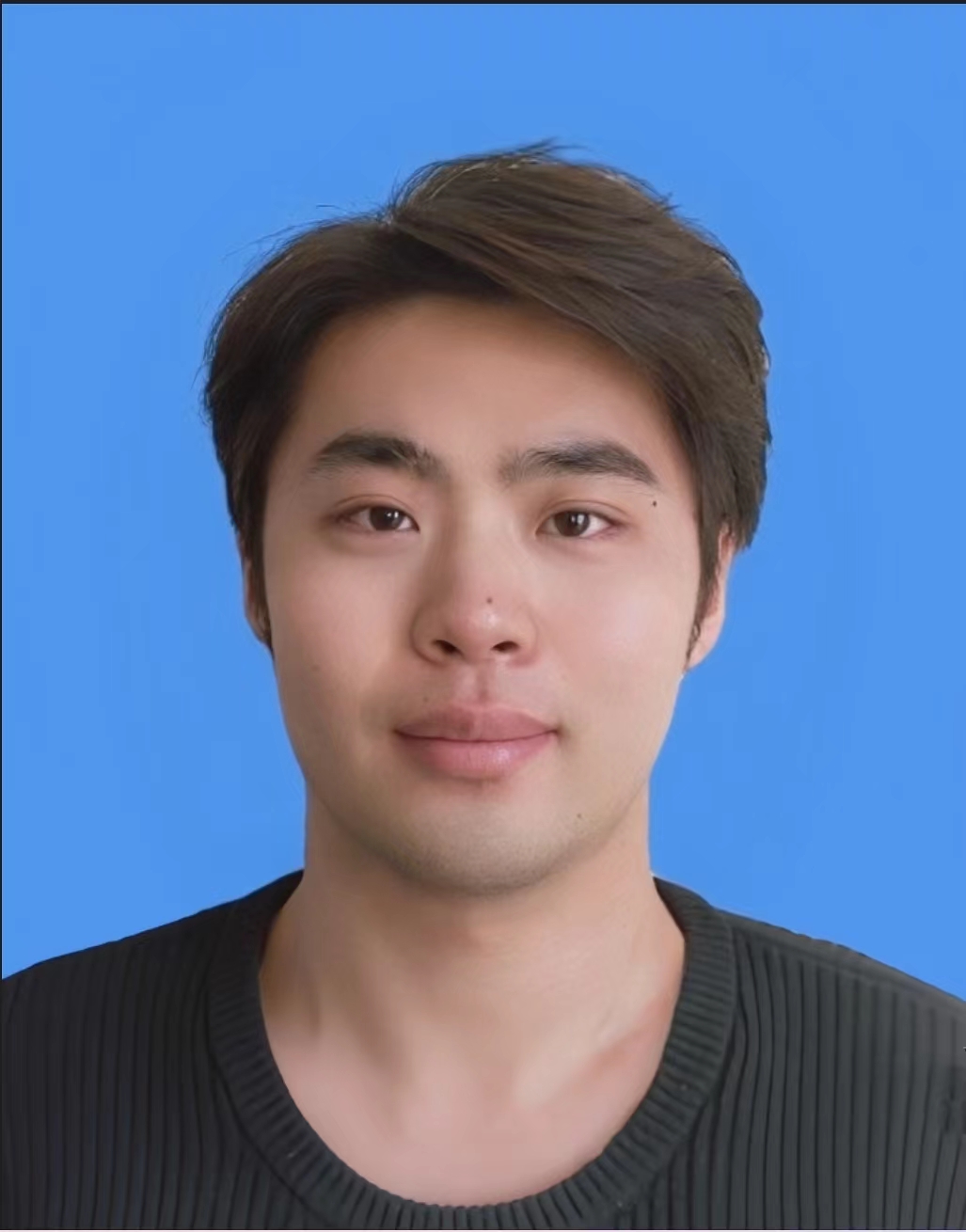}}]{Zhuo Chen}
received the B.E. degree in information security from the North China Electric Power University, Beijing, China, in 2019. He is currently pursuing the Ph.D. degree with the School of Cyberspace Science and Technology, Beijing Institute of Technology. His current research interests include blockchain technology and covert communication.
\end{IEEEbiography}

\begin{IEEEbiography}[{\includegraphics[width=1in,height=1.25in,clip,keepaspectratio]{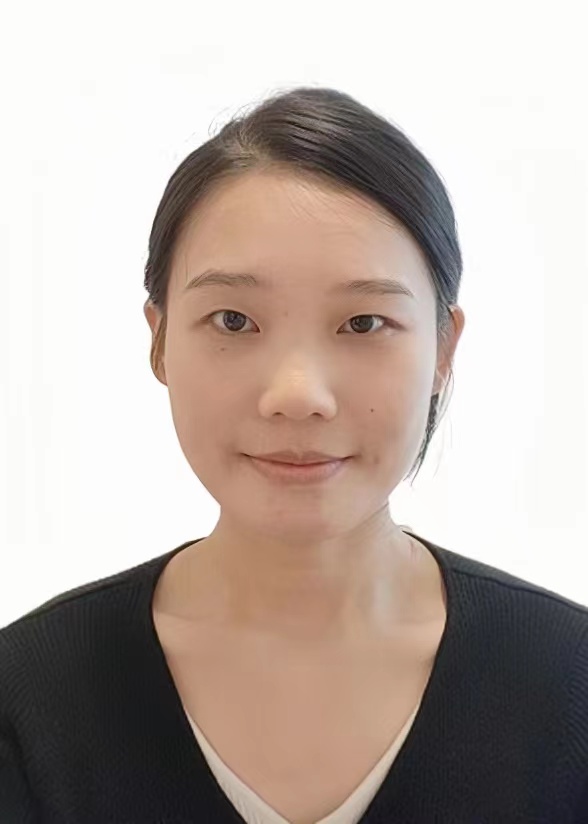}}]{Jialing He}
(Member, IEEE) received the M.S. and Ph.D. degrees from the Beijing Institute of Technology, Beijing, China, in 2018 and 2022, respectively, where she is currently an assistant research scientist in college of computer science, Chongqing University, Chongqing, China. Her current research interests include differential privacy and user behavior mining. 
\end{IEEEbiography}

\begin{IEEEbiography}[{\includegraphics[width=1in,height=1.25in,clip,keepaspectratio]{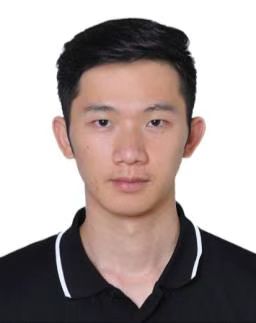}}]{Jiacheng Wang}
received the Ph.D. degree from the School of Communication and Information Engineering, Chongqing University of Posts and Telecommunications, Chongqing, China. He is currently a Research Associate in computer science and engineering with Nanyang Technological University, Singapore. His research interests include wireless sensing, semantic communications, and metaverse.
\end{IEEEbiography}

\begin{IEEEbiography}[{\includegraphics[width=1in,height=1.25in,clip,keepaspectratio]{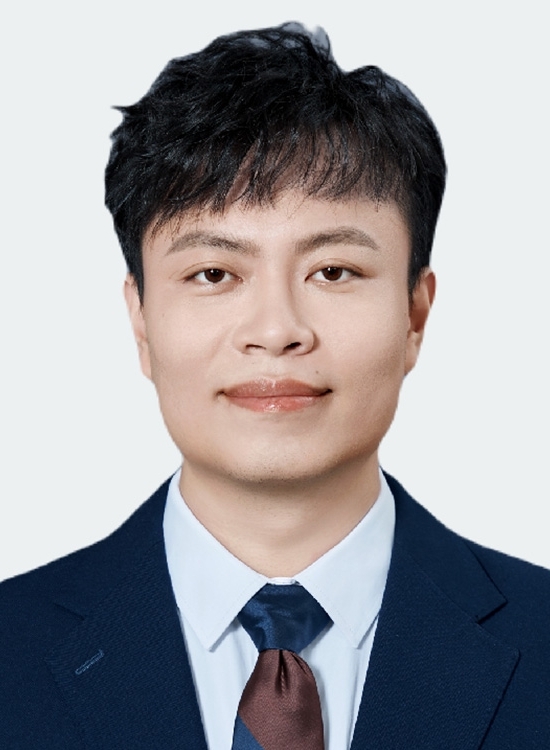}}]{Zehui Xiong}
received the B.Eng. degree in telecommunications engineering from the Huazhong University of Science and Technology (HUST), Wuhan, China, and the Ph.D. degree in computer science and engineering from Nanyang Technological University (NTU), Singapore. He is currently an Assistant Professor with the Singapore University of Technology and Design (SUTD), and also an Honorary Adjunct Senior Research Scientist with the Alibaba-NTU Singapore Joint Research Institute, Singapore. His research interests include wireless communications, the Internet of Things, blockchain, edge intelligence, and metaverse.
\end{IEEEbiography}

\begin{IEEEbiography}[{\includegraphics[width=1in,height=1.25in,clip,keepaspectratio]{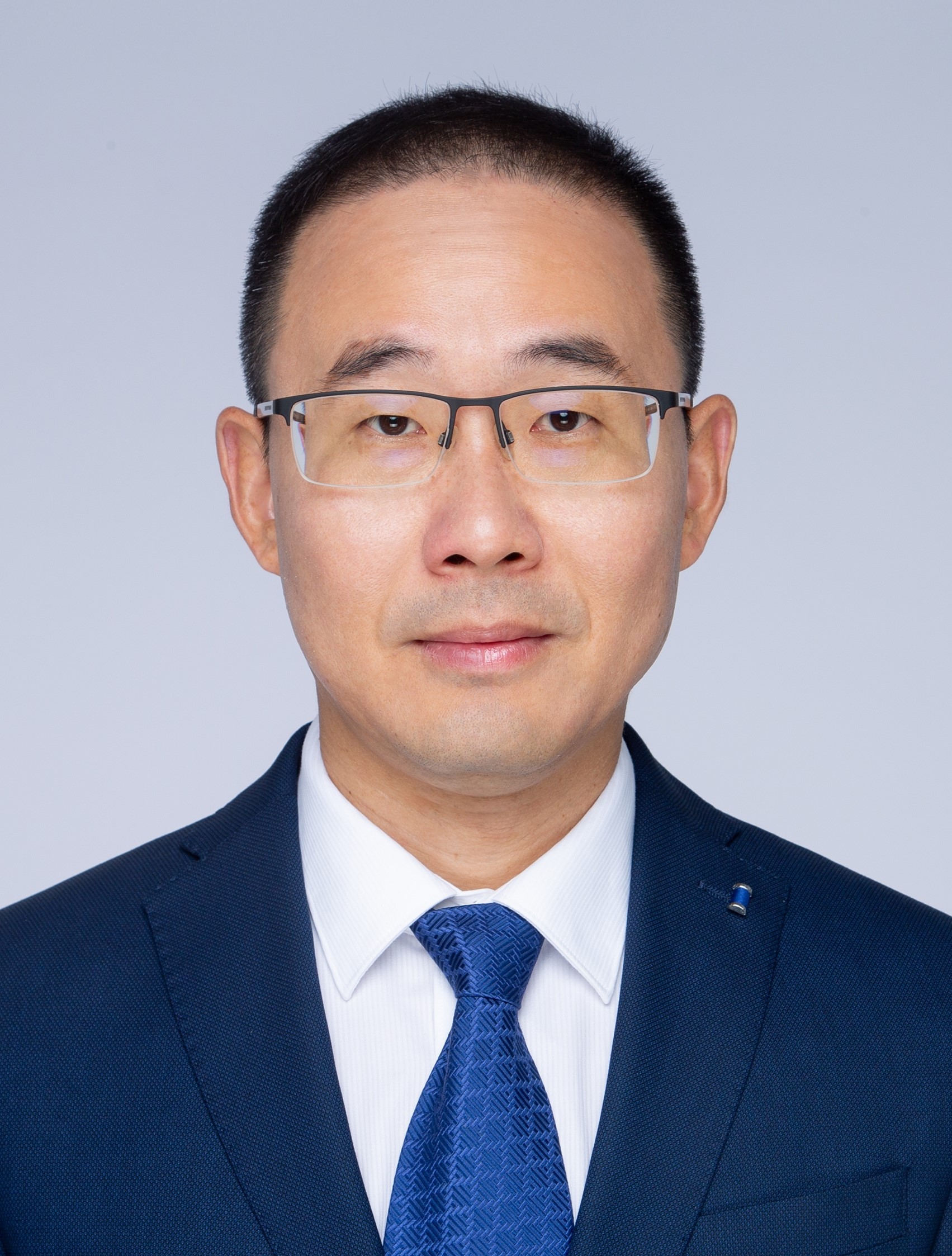}}]{Tao Xiang}
received the BEng, MS and PhD degrees in computer science from Chongqing University, China, in 2003, 2005, and 2008, respectively. He is currently a Professor of the College of Computer Science at Chongqing University. Prof. Xiang’s research interests include multimedia security, cloud security, data privacy and cryptography. He has published over 100 papers on international journals and conferences. He also served as a referee for numerous international journals and conferences.
\end{IEEEbiography}

\begin{IEEEbiography}[{\includegraphics[width=1in,height=1.25in,clip,keepaspectratio]{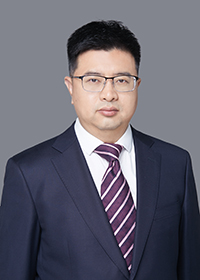}}]{Liehuang Zhu}
(Senior Member, IEEE) is a Full Professor with the School of Cyberspace Science and Technology, Beijing Institute of Technology. He is selected into the Program for New Century Excellent Talents in University from Ministry of Education, China. He has published over 60 SCI-indexed research papers in these areas, as well as a book published by Springer. His research interests include Internet of Things, cloud computing security, Internet, and mobile security. He serves on the editorial boards of three international journals, including IEEE Internet of Things Journal, IEEE Network, and IEEE Transactions on Vehicular Technology. He won the Best Paper Award at IEEE/ACM IWQoS 2017 and IEEE TrustCom 2018.
\end{IEEEbiography}

\begin{IEEEbiography}[{\includegraphics[width=1in,height=1.25in,clip,keepaspectratio]{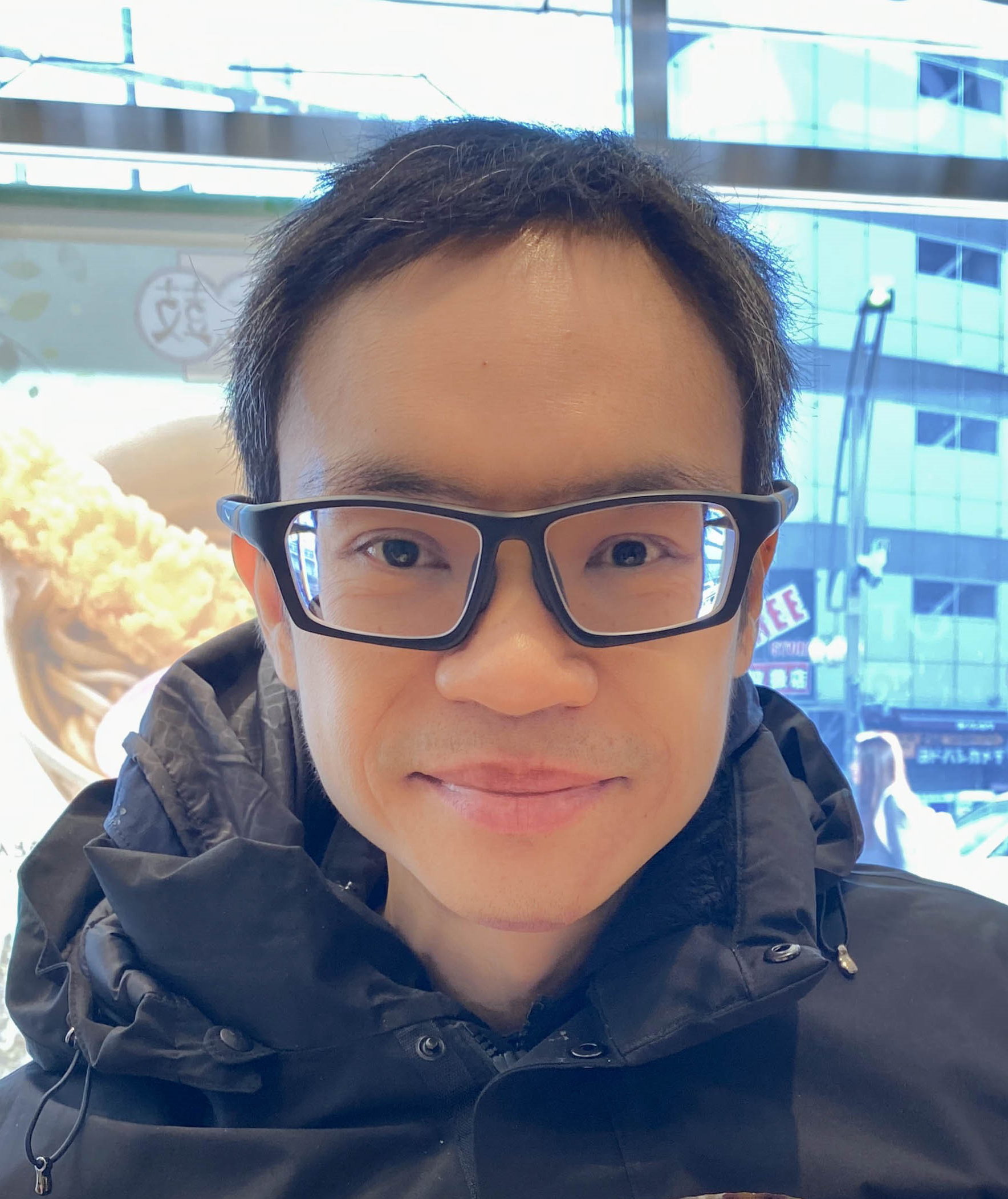}}]{Dusit Niyato}
(Fellow, IEEE) is a professor in the College of Computing and Data Science, at Nanyang Technological University, Singapore. He received B.Eng. from King Mongkuts Institute of Technology Ladkrabang (KMITL), Thailand and Ph.D. in Electrical and Computer Engineering from the University of Manitoba, Canada. His research interests are in the areas of sustainability, edge intelligence, decentralized machine learning, and incentive mechanism design.
\end{IEEEbiography}

\end{document}